\newtheorem{definition}{Definition}
\newtheorem{theorem}{Theorem}
\newtheorem{lemma}{Lemma}
\newtheorem{corollary}{Corollary}
\newenvironment{proof}{\textsf{\bfseries{}Proof. }}{\qed\par}
 \newcommand{\bbbn}{\mathbb{N}}
 \newcommand{\dom}{\mathrm{dom}}
 \newcommand{\x}{\mathbf{x}}
 \newcommand{\e}{\in}
 \newcommand{\s}{\rightarrow}
 \newcommand{\y}{\mathbf{y}}
\newcommand{\rb}{\raisebox{0.5ex}}
 \newcommand{\iprod}[2]{\langle #1 | #2 \rangle}
 \newcommand{\oprod}[2]{| #1 \rangle\langle #2 |}
 \newcommand{\qed}{\mbox{ }\hfill$\square$\par}
\begin{document}

\title{A New Quantum Random Number Generator Certified by Value Indefiniteness}

 \author{Jos\'{e} Manuel Ag\"{u}ero
	 Trejo and Cristian S. Calude\\
	 School of Computer Science\\ University of Auckland, New Zealand
	 }
	 \maketitle
\begin{abstract}
In this paper we propose a new ternary QRNG based on measuring located value indefinite observables with probabilities $1/4, 1/2, 1/4$ and prove that every sequence generated  is maximally unpredictable,  3-bi-immune (a stronger form of bi-immunity),  and its prefixes are Borel normal.  The  ternary quantum random digits produced by the   QRNG are algorithmically transformed  into quantum random bits using an alphabetic morphism which preserves all the above properties. 
\end{abstract}

\section{Introduction}

\label{intro}
Randomness is an important resource in science, statistics, cryptography, gambling, medicine, art and politics. Pseudo-random number generators (PRNGs) -- computer algorithms designed to simulate randomness -- have been  the main, if not  the only,  sources of randomness for a long time, but their quality is weak.
As early as 1951 von Neumann realised the danger of 
mistakenly believing that PRNGs produce  ``true`` 
randomness~\cite{von-neumann1}: ``Anyone who attempts to generate random numbers by deterministic means is, of course, living in a state of sin.'' Problems with the poor quality  PRNGs are well known: a classical example is  the discovery in 2012 of a weakness in a worldwide-used  encryption system which  was traced to  a PRNG~\cite{factor_wrong2012}.

With the development of algorithmic information theory~\cite{ch6,li-vitanyi-2019,DH}  various classes  of (algorithmic) random strings/sequences have been studied and von  Neumann intuition was rigorously proved in a more general form: {\em mathematically there is no 
`true`` random string/sequence}~\cite{calude:02}.

The importance of high quality randomness -- which is obvious in cryptography, where good randomness is vital to the security of data and communication, but is equally true in other areas ranging from statistics and information science to   medicine, physics, politics and religion -- 
  has driven a recent surge of interest 
 in developing  ``better than PRNG" random number generators, in particular, quantum random number generators (QRNGs)~\cite{Naco_rand,RevModPhys.89.015004}.
QRNGs are generally considered to be, by their very nature, ``better than PRNGs" and are expected to
``excel" precisely on properties of randomness where algorithmic PRNGs obviously fail: incomputability and inherent unpredictability.  The formulation {\it ``better than PRNGs"} can be read into two radically different ways: a) ``better'' than {\em some} PRNGs, b) ``better'' than {\em any} PRNGs. Of course, b) is the required property.

To date only one class of QRNGs has been proved to satisfy b)~\cite{Abbott:2010uq,acs-2015-info6040773,PhysRevLett.119.240501}. This type of QRNG is based on a located form~\cite{abbott2012strongrandomness,Abbott:2013ly, DBLP:conf/birthday/AbbottCS15,2015-AnalyticKS} of  the Kochen-Specker Theorem~\cite{Kochen:2017aa}, a result true only in Hilbert spaces of dimension {\em at least three}. These QRNGs -- which locate and repeatedly measure a value-indefinite quantum observable -- produce more than incomputable sequences (over alphabets with at least three letters);  more precisely, they generate sequences having a form of algorithmic randomness called {\it  bi-immunity}~\cite{DH},   that is, sequences for which no algorithm can compute more than finitely many exact values.
The experimental analysis of 10 samples of $2^{30}$ binary strings generated 
with the implementation~\cite{PhysRevLett.119.240501} of the QRNG proposed in~\cite{Abbott:2010uq,acs-2015-info6040773} showed incomputability in a weak and not decisive manner.  Some possible reasons include a problematic branch with probability zero used in the generalised beam splitter  -- recall,  the Kochen-Specker Theorem is false in dimension 2 --, the  not long enough length of samples, and, of course, imperfections in the implementation of the measuring protocol~\cite{AbbottCalude10}.

In this paper we improve the QRNG~\cite{Abbott:2010uq,acs-2015-info6040773,PhysRevLett.119.240501} and propose a new ternary QRNG based on measuring located value indefinite observables with probabilities $1/4, 1/2,1/4$. {\em We prove that every sequence generated  is maximally unpredictable,  3-bi-immune,  and its prefixes are Borel normal.}  The  ternary quantum random digits produced by the   QRNG are algorithmically transformed  into quantum random bits using an alphabetic morphism which preserves all the above properties.

The paper is organised as follows. Section~\ref{notat} includes the notation and main definitions.
 In Section~\ref{val_in} we  present the main theoretical basis of the QRNG: localising value indefinite observables, 
and their unpredictability.  Section~\ref{qrng1} is devoted to the blueprint of the original QRNG based on Spin-1; in 
Section~\ref{qrng2} we present the new QRNG. In Section~\ref{3} we prove the main  properties of ternary sequences
produced by the QRNG and in  Section~\ref{2} we introduce the transformation from ternary to binary and prove that
it preserves all properties proved in the previous section. The last section includes a summary and further questions.

\section{Notation and definitions}
\label{notat}
The set of positive integers will be denoted by $\bbbn$. Consider the alphabet
$A_{b}=\{0,1,\dots,b-1\}$, where $b\ge 2$ is an integer; the elements of
$A_b$ are to be considered the digits used in natural positional
representations of numbers in the interval $[0,1)$ at base $b$. By  $A_{b}^{*}$
and  $A_{b}^{\omega}$ we denote the sets of (finite)  strings and (infinite)
sequences over the alphabet $A_{b}$. Strings will be denoted by $x,y,u,w$; the
length of the string $x= x_1x_2\dots x_m$, $x_i\in A_{b}$, is denoted by
$|x|_{b}=m$ (the subscript $b$ will be omitted if it is clear from the
context); $A_{b}^{m}$ is the set of all strings of length $m$.  Sequences will
be denoted by $\mathbf{x}= x_1x_2\dots$; the prefix of length $m$ of
$\mathbf{x}$ is the string $ \mathbf{x}(m)= x_1x_2\dots x_m$. 
Strings
will be ordered quasi-lexicographically according to the natural order $0<1<2  < \dots <b-1$ on the
alphabet $A_{b}$. For example, for $b=2$, we have $0<1<00<01<10<11<000 \dots$. 
We
assume knowledge of elementary computability theory over different size
alphabets~\cite{calude:02}. 
Sequences can be also viewed as $A_{b}$-valued
functions defined on $\bbbn$.

%The {\em value} of a string $x_{1}x_{2} \ldots x_{n} \in A_b^*$ is the real
%number $v_b(x_{1}x_{2} \ldots x_{n})=\sum_{i=1}^{n} x_i b^{-i} \in [0,1).$ The
%value of the sequence $\x = x_{1}x_{2} \ldots \in A_b^{\omega}$ is the real
%number $v_b(\x)=\sum_{i=1}^{\infty} x_{i}b^{-i}\in [0,1).$ Clearly,
%$v_b(\x(n))\s v_b(\x)$ as $n\s\infty$.
%
%
%If $v_b(\x)$ is irrational, then $v_b(\x')=v_b(\x)$ implies $\x'=\x$.  Some
%rational numbers have two different representations; since our interest is in
%incomputable reals and rational numbers are far from being incomputable, this
%issue will not cause any problem.  Let $\II$ denote the set of irrational
%numbers in $(0,1)$ and let $\rr_b$ be defined on $\II$ as the inverse of
%$v_b$, that is, for an irrational number $s\in(0,1)$, $\rr_b(s)$ is the unique
%infinite sequence over $A_b$ such that $s= v_b(\rr_b(s))$.
%
%If $x,y$ are in $A_b^+$ and have the same length, then $v_b(x)\le v_b(y)$ if
%and only if $x$ is lexicographically less or equal to $y$, written, $x\le_{l}
%y$.

%Let $\mathcal{P}$ denote the class of all partially-computable functions
%of one argument over $\mathbb{N}$, let $\mathcal{P}^2$ denote
%the class of all partial-functions of two arguments over $\mathbb{N}$,
%and let $\mathcal{R}$ denote the class of all computable functions of
%one argument over $\mathbb{N}$.
%
%Any function $\psi \in \mathcal{P}^2$ is called a \emph{numbering of
%partially-computable functions}.  Set $\psi_e = \lambda x.\psi(e,x)$ and
%$\mathcal{P}_{\psi} := \{\psi_e \mid e\in \mathbb{N}\}$.

Let $\mathcal{B}(\mathbb{R})$ be the class of Borel sets in $\mathbb{R}$, that is, the smallest $\sigma$-algebra containing all opens sets.
 Let $(\Omega,\mathcal{F},\mathbb{P})$ be a probability space. A random variable  $X:\Omega\to\mathbb{R}$ is a function such that for every $B\in\mathcal{B}(\mathbb{R})$ we have $\{w\in\Omega:X(w)\in B\}\in\mathcal{F}$. Furthermore, if for all $x,y\in\mathbb{R}$,  $\mathbb{P}(X\leq x,Y\leq y)=\mathbb{P}(X\leq x)\mathbb{P}(Y\leq y)$, we say that $X,Y$ are independent random variables.  By $\mathbb{E}(X)= \sum_{x} x\mathbb{P}(X=x)$
% \int_\Omega X(w)\, \mathbb{P}(dw)$ 
 we denote the expectation of the random variable 
 $X$~\cite{Billingsley:1979aa}. Let $u\in A_b^{*}$, $uA_b^{\omega} = \{\x \in A_b^{\omega} : \x(|u|)= u\}$ and consider the smallest $\sigma$-algebra $\mathcal{B}(A_b^{\omega})$ generated by the family
($uA_b^{\omega} : u\in A_b^{*}$). The {\it Lebesgue space (probability) }is the probability space $(A_b^{\omega},\mathcal{B}(A_2^{\omega}),\mathbb{P})$ where $\mathbb{P}(uA_b^{\omega}) = b^{-|u|}$~\cite{calude:02}.

In contrast to the bounds on probability distributions given by Bell Theorem~\cite{bell-66,bell-87} under the  premise of locality, Kochen-Specker Theorem shows that, assuming  non-contextuality\footnote{Informally, the property that the outcome of the measurement of a quantum observable is independent  of how that value is eventually measured.}, the Hilbert-space structure of quantum mechanics makes it  impossible to assign ``classical'' definite values to all possible quantum observables in a consistent manner. 
Since such a definite value is precisely a (deterministic) hidden variable specifying, in advance, the result of a measurement of an observable, the theorem shows that the outcomes of all quantum measurements on a system cannot be simultaneously pre-determined.

As is common in modern treatments of the Kochen-Specker Theorem~\cite{Cabello:1994ly,Cabello:1996zr,peres-91}  we focus on one-dimensional (rank-1) projection observables, and we denote the observable projecting onto the linear subspace spanned by a vector $\ket{\psi}$ as $P_\psi=\frac{\oprod{\psi}{\psi}}{|\iprod{\psi}{\psi}|}$\rb.
We then fix a positive integer $n\geq 2$ and let $O \subseteq \{  P_{\psi}:  \ket{\psi} \in \mathbb{C}^{n} \}$ be a non-empty set of one-dimensional projection observables on the Hilbert space $\mathbb{C}^{n}$.

\begin{definition}
A set $C \subset O$ is a \emph{context} of $O$ if $C$ has $n$ elements (i.e $|C| = n$) and for all $P_{\psi}, P_{\phi} \in C$ with $P_{\psi} \neq P_{\phi}, \braket{\psi}{\phi} = O$. 
\end{definition}

\begin{definition}
A  \emph{value assignment function} (on $O$) is a partial function $v:O\to \{0,1\}$ assigning values to some (possibly all) observables in $O$.  The partiality of the  function $v$ means that $v(P)$ can be $0,1$ or indefinite.
\end{definition}

\begin{definition}
An observable $P \in O$ is  \emph{value definite} (under the assignment function $v$)  if $v(P)$ is defined, i.e.~it is 0 or 1; otherwise, it is  \emph{value indefinite} (under $v$). Similarly, we call $O$  \emph{value definite} (under $v$) if every observable $P \in O$ is value definite.
\end{definition}

%\begin{definition}[Admissibility]
%Let $O$ be a set of one-dimensional projection observables on $\mathbb{C}^{n}$ and let $v:O\rightarrow \{0,1\}$ be a value assignment function. Then $v$ is  \emph{admissible}\footnote{In agreement with quantum mechanics predictions.} if the following two conditions hold for every context $C$ of $O$:
%
%\begin{enumerate}[a)]
%\item if there exists a $P \in C$ with $v(P)=1,$ then $v(P') = 0$ for all $P' \in C\setminus \{P\}$;
%\item if there exists a $P \in C$ with $v(P')=0$ for all $P'\in C\setminus \{P\},$ then $v(P)=1$.
%\end{enumerate}
%
%\end{definition}

\section{Theoretical basis}
\label{val_in}

In this section  we   present the main theoretical basis of the QRNG.

\subsection{Localising value indefiniteness}

Consider the following  Kochen-Specker assumptions:

\begin{itemize}
\item {\bf Admissibility:} Let $O$ be a set of one-dimensional projection observables on $\mathbb{C}^{n}$ and let $v:O\rightarrow \{0,1\}$ be a value assignment function. Then $v$ is  \emph{admissible}\footnote{In agreement with quantum mechanics predictions.} if for  every context $C$ of $O$, we have that $\sum_{P\in C}v(P) = 1$, i.e.~only one projection observable in a context can be assigned the value $1$.
 %$v$ is a Boolean frame function with weight $1$, so 

\item {\bf Non-contextuality of definite values}:  The outcome obtained by
measuring a value definite observable
(a pre-existing physical property) is \emph{non-contextual}, i.e.\ it does not
depend on other compatible
 observables which may be measured alongside it.
\end{itemize}

The fundamental result is:
\begin{theorem}[Kochen-Specker~\cite{Kochen:2017aa}]
\label{KS}
Let $n \geq 3$. Then there exists a (finite) set of one-dimensional projection observables $O$ on the Hilbert space $\mathbb{C}^{n}$ such that there is no value assignment function $v$ satisfying the following 
three conditions: i) $O$ is value definite under $v$, ii)  $v$ is admissible, iii) $v$ is non-contextual.
\end{theorem}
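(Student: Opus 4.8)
The plan is to recast an admissible, non-contextual, value-definite assignment as a $\{0,1\}$-colouring of rays and then exhibit a finite configuration in $\bbbc^3$ that admits no such colouring; the case $n>3$ then reduces to dimension three. First I would use non-contextuality to observe that each one-dimensional projection $P_\psi$ carries a single value $v(P_\psi)\in\{0,1\}$, independent of the context in which it is measured, so that $v$ is genuinely a function on rays. Admissibility then says that in every context $C$ (an orthonormal basis) exactly one ray is labelled $1$. Two forcing rules follow at once: (a) orthogonal rays are never both labelled $1$, since any two orthogonal rays extend to a context whose values sum to $1$; and (b) in a context in which all but one ray are labelled $0$, the remaining ray is forced to $1$. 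The theorem thus becomes the assertion that, for a suitable finite set $O$, no labelling of its rays can place exactly one $1$ in each context subject to the orthogonality graph of $O$.

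The heart of the argument is the case $n=3$. Here I would fix an explicit finite set of rays in $\bbbr^3\subseteq\bbbc^3$ — the Kochen–Specker configuration, or one of its modern economical versions — whose orthogonality graph is rich enough to over-constrain every candidate labelling. The contradiction is obtained by propagation: choosing any ray labelled $1$, rule (a) forces its orthogonal neighbours to $0$, and rule (b) then forces further rays to $1$; iterating along the edges of the chosen graph eventually assigns to some ray both values, or manufactures a context carrying zero or two $1$'s. The geometric engine behind the construction is the classical lemma that two rays lying below a fixed angular threshold cannot both be labelled $1$, because a short chain of orthogonalities forces one of them to $0$; placing finitely many rays so that these forced implications close into a cycle produces the conflict.

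For $n>3$ the obstruction is lifted to $\bbbc^n$ by a standard embedding: one realises the three-dimensional configuration inside a fixed three-dimensional subspace and adjoins rays spanning the orthogonal complement, arranged so that any consistent $\bbbc^n$-labelling would induce a consistent three-dimensional one, which cannot exist. The entire difficulty therefore concentrates in dimension three. I expect the main obstacle to be precisely the \emph{finite} verification that the chosen orthogonality graph is non-colourable: this is not a soft argument, since rules (a)–(b) must be shown to leave no admissible labelling at all, and historically this required more than a hundred carefully placed vectors. The delicate balance is to keep the configuration finite while ensuring that the forced implications genuinely close up — which is exactly the combinatorial-geometric content of the Kochen–Specker construction, and the reason the result fails in dimension two, where no such rigid web of contexts is available.
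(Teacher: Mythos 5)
Two preliminary remarks: the paper does not prove Theorem~\ref{KS} at all --- it imports it from Kochen and Specker's 1967 paper as a black box --- so there is no internal proof to compare against, and your overall plan (non-contextuality makes $v$ a $\{0,1\}$-colouring of rays, admissibility puts exactly one $1$ in each orthonormal triple, a finite ray configuration in $\mathbb{R}^3$ is then over-constrained, and higher dimensions reduce to three) is indeed the shape of the classical argument. However, two of your steps, as stated, fail. The first is the ``geometric engine'': the classical Kochen--Specker lemma is that two rays at angle at most $\arcsin(1/3)$ cannot receive \emph{different} values --- the value $1$ propagates to nearby rays through the ten-point gadget $\Gamma_1$ --- and the contradiction comes from chaining five such gadgets in $18^{\circ}$ steps between two orthogonal members of a triad, forcing both to carry the value $1$ when admissibility demands that one carries $0$. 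What you state instead, that ``two rays lying below a fixed angular threshold cannot both be labelled $1$,'' is the property of a different gadget (Clifton's eight-ray configuration, realised at the large angle $\arccos(1/3)$, and the ``01-gadgets'' of the recent literature), and it cannot drive the propagation you describe: if consecutive rays of a chain merely cannot both be $1$, nothing propagates along the chain and no conflict with the triad rule ever materialises. So the cycle-closing argument does not go through with the lemma you cite as its engine. (A smaller point: your rule (a) is only available for orthogonal pairs that lie in a common context \emph{of $O$}; ensuring this is part of building the finite set.)

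The second gap is the reduction from $n>3$ to $n=3$. If you place the three-dimensional configuration in a subspace $H$ and merely adjoin an orthonormal basis $e_4,\dots,e_n$ of $H^{\perp}$, the assignment giving $1$ to $e_4$ and $0$ to every other ray is admissible and non-contextual on the enlarged set: every $n$-element context of the enlarged set is a triad of $H$ together with $\{e_4,\dots,e_n\}$, and each such context then contains exactly one $1$. The assignment induced on the three-dimensional configuration is identically $0$, which is \emph{not} an admissible three-dimensional assignment, so no contradiction is ``induced'' --- the naive embedding simply fails. To lift the obstruction one must additionally force every complement ray to take the value $0$, for instance by planting a second copy of the three-dimensional KS set in a subspace overlapping $H^{\perp}$ and tying the two copies together with shared contexts; that forcing structure is the actual content of the dimension lift and is absent from your sketch. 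Finally, you correctly identify that the finite non-colourability verification (historically 117 vectors) is the heart of the theorem, but acknowledging it is not supplying it: as written, the proposal is a programme whose combinatorial core is deferred and whose two bridging steps, as formulated, are incorrect.
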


Kochen-Specker Theorem shows that, in agreement with quantum mechanics, not every observable can be both non-contextual and value definite, but it does not describe the extent of this incompatibility. In fact,  it has been shown that for any sets of observables there exists an admissible assignment function under which the set of observables is value definite and at least one observable is non-contextual. That is, the incompatibility between the Kochen-Specker assumptions is not maximal, hence not all observables need to be value indefinite.

Why are value indefinite observables important? One reason is that {\it measuring one such observable may produce a random outcome}. But, to measure a value indefinite observable we have to ``effectively find" one, not just know that such an observable exists as Kochen-Specker Theorem assures. Essentially, to answer the above question in the   affirmative, we need a constructive form of the Kochen-Specker Theorem allowing 
to localise a value indefinite observable.  Motivated by  Einstein, Podolsky and Rosen  definition of {\em physical reality}~\cite[p.~777]{epr}:
 
 \begin{quote}
	 If, without in any way disturbing a system, we can predict with certainty the value of a physical quantity, then there exists a \emph{definite value} prior to observation corresponding to this physical quantity.
\end{quote}

\noindent we make the following assumption:

\begin{itemize}
\item {\bf Eigenstate principle}:  If a quantum system is prepared in the
state $\ket{\psi}$, then the projection observable
$P_\psi$ is value definite.
\end{itemize}

In detail,  if a quantum system is prepared in an arbitrary state $\ket{\psi}\in\mathbb{C}^{n}$, then the measurement of the observable $P_{\psi}$ should yield the outcome $1$, hence, if $P_{\psi}\in O$, then $v(P_{\psi})=1.$

%\begin{theorem}[Localised Kochen-Specker~\cite{abbott2012strongrandomness,vi-aeverywhere-2014,2015-AnalyticKS}.
%]
%\label{EffecKSC}
%Let $n\geq 3$ and $\ket{\psi},\ket{\phi}\in\mathbb{C}^{n}$ be unit vectors such that $0<\abs{\bra{\psi}\ket{\phi}}<1$. We can effectively find a finite set of one-dimensional projection observables $O$ containing $P_{\psi} = \ket{\psi}\bra{\psi}$ and $P_{\phi}=\ket{\phi}\bra{\phi}$ for which there is no admissible value assignment function under which $O$ is non-contextual, $v(P_{\psi}) = 1$ and $P_{\phi}$ is value definite.
%\end{theorem}

\begin{theorem}[Localised Kochen-Specker~\cite{abbott2012strongrandomness,vi-aeverywhere-2014,2015-AnalyticKS}]
\label{EffecKS}
Assume a quantum system prepared in the state
$\ket{\psi}$ in a dimension $n\ge 3$ Hilbert space ${\mathbf C}^n$, and let $\ket{\phi}$
be any state neither orthogonal nor parallel to $\ket{\psi}$ {\rm (}$0<\abs{\bra{\psi}\ket{\phi}}<1${\rm )}. If the following three conditions are satisfied: i) admissibility, ii) non-contextuality and iii) 
eigenstate principle, 
then the projection observable $P_\psi$ is  value
indefinite.
\end{theorem}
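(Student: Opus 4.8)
The plan is to argue by contradiction, upgrading the pure existence statement of Theorem~\ref{KS} to a \emph{located} one by means of the constructions of~\cite{abbott2012strongrandomness,vi-aeverywhere-2014,2015-AnalyticKS}. First I would invoke the eigenstate principle to fix a reference value: it forces the projector onto the prepared eigenstate to be value definite and equal to $1$; writing this eigenstate as $\ket{\phi}$ (the ray tied to $\ket{\psi}$ by $0<\abs{\braket{\psi}{\phi}}<1$), we have $v(P_\phi)=1$. The remaining task is to show that the located observable $P_\psi$ admits no consistent definite value, handling the two hypothetical assignments $v(P_\psi)=0$ and $v(P_\psi)=1$ in turn.

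Since $n\ge 3$ while $\ket{\psi}$ and $\ket{\phi}$ jointly span at most a plane, I would first reduce to $\mathbb{C}^3$: restrict attention to a three-dimensional subspace of $\mathbb{C}^n$ containing both rays. Non-contextuality guarantees that the global assignment restricts to an admissible, non-contextual assignment on the projectors living in this subspace, and every Kochen--Specker gadget needed below already exists in dimension three. Inside $\mathbb{C}^3$ I would then invoke the analytic Kochen--Specker construction of~\cite{2015-AnalyticKS}: for the given non-orthogonal, non-parallel pair it produces an explicit finite set $O'$ of rank-one projectors, organised into overlapping contexts (orthonormal triples), in which both $P_\psi$ and the value-$1$ projector $P_\phi$ occur, and whose incidence pattern is rigid enough to force the value of $P_\psi$.

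Next I would carry out the value propagation. Assuming $v(P_\psi)$ definite, I would push the two constraints ``exactly one $1$ per context'' (admissibility) and ``each projector receives the same value in every context containing it'' (non-contextuality) around $O'$, starting from the anchor $v(P_\phi)=1$. For each of the two candidate values of $v(P_\psi)$, the propagation reaches a context that is forced either to contain two $1$'s or to contain no $1$ at all, contradicting admissibility in both cases. Hence no definite value can be assigned to $P_\psi$, i.e.\ $P_\psi$ is value indefinite, which is the assertion.

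The genuinely hard step is the localisation itself: exhibiting, for an \emph{arbitrary} admissible pair of rays, a finite family of contexts whose combinatorics forces the contradiction no matter which definite value is tried for $P_\psi$. This is the substantive content borrowed from~\cite{2015-AnalyticKS}; once the explicit set $O'$ is in hand, the admissibility/non-contextuality bookkeeping is mechanical. A secondary technical point is the reduction from general $n\ge 3$ to $\mathbb{C}^3$, where I must check that restricting the assignment to the sub-configuration preserves both admissibility and non-contextuality.
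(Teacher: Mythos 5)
The paper gives no proof of Theorem~\ref{EffecKS} at all: it is imported (with citations) from \cite{abbott2012strongrandomness,vi-aeverywhere-2014,2015-AnalyticKS}, so the only meaningful comparison is with the proofs in those references --- and your outline reproduces their architecture faithfully. The genuine proof does exactly what you describe: anchor the prepared state's projector at value $1$ via the eigenstate principle; reduce from $\mathbb{C}^n$ to a three-dimensional subspace containing both rays (each orthonormal triple there is padded by a fixed basis of the orthocomplement, whose members are forced to $0$ by admissibility once the anchor is set, so the restricted assignment behaves like an admissible one on $\mathbb{C}^3$); then rule out both hypothetical definite values of the target projector by propagating the constraints through explicit finite sets of rank-one projectors. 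You are also right that the hard localisation step is precisely the content of \cite{2015-AnalyticKS} --- though, for accuracy, it is obtained there not as one uniform gadget but by chaining reduction lemmas that bring an arbitrary overlap $0<\abs{\braket{\psi}{\phi}}<1$ into the range handled directly in \cite{abbott2012strongrandomness}.

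Two points need repair. First, your labels contradict the theorem's hypothesis: the system is prepared in $\ket{\psi}$, so the eigenstate principle gives $v(P_\psi)=1$, and the observable proved value indefinite must be $P_\phi$, not $P_\psi$. You instead declared the prepared eigenstate to be $\ket{\phi}$, evidently to make the printed conclusion come out --- but the printed conclusion is a typo in the paper: as literally stated the theorem is self-refuting (the eigenstate principle makes $P_\psi$ value \emph{definite}), and both the cited sources and the paper's own later uses of the theorem (e.g.\ ``a system prepared on an arbitrary state $\ket{\psi}$ must have a definite value associated to the operator $P_{\psi}$'') assert indefiniteness of $P_\phi$. You should state this swap explicitly rather than silently relabeling, since as written your anchor $v(P_\phi)=1$ is inconsistent with the preparation hypothesis. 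Second, your propagation step tacitly requires admissibility formulated for \emph{partial} value assignments (a $1$ in a context forces $0$ on its companions; $n-1$ zeros in a context force a $1$), which is how \cite{abbott2012strongrandomness,2015-AnalyticKS} define it; the present paper's formulation, $\sum_{P\in C}v(P)=1$, presupposes all values in the context are defined and does not by itself license stepwise inference from a partial assignment. With these emendations your sketch is a correct account of the cited proof, with the caveat that the substantive analytic construction remains a citation rather than an argument.
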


From Theorem~\ref{EffecKS} we deduce that, given a system prepared in state $\ket{\psi}$, a one-dimensional projection observable can only be value definite if it is an eigenstate of that observable.
Furthermore, for any diagonalisable observable $O$ with spectral decomposition $O=\sum_{i=1}^n\lambda_{i}P_{\lambda_{i}}$, where $\lambda_{i}$ denotes each distinct eigenvalue with corresponding eigenstate $\ket{\lambda_{i}}$,  $O$ has a predetermined measurement outcome if and only if each projector in its spectral decomposition has a predetermined measurement outcome. Thus, we can generalise our previous result to the outcome of the measurement of an observable with non-degenerate spectra. 
Such generalisation is of particular importance for applying this result to elements of physical reality where a measurement is assumed to yield a meaningful result that describes a physical attribute; thus, utilising the value assignment function to represent the realisation of a given state whenever the corresponding observable is value definite. The latter can be observed as follows.

Let $C = \{P_{1},\ldots,P_{n}\}$ be a context of projection observables and let $v$ be a value assignment function such that $v(P_{1})=1$ under $C$.  Since a context is a maximal set of compatible projection observables it follows that, if any pair $(P_{1},P_{i})$ is measured, then the system will collapse into the eigenstate $\ket{\phi}$ of the projection observable $P_{1}$ with eigenvalue $1$. It follows that, as all observables in $C$ are physically co-measurable and $\sum_{j=1}^nP_{j}=1$, we deduce that $\ket{\phi}$ is an eigenstate of $P_{i}$ with corresponding eigenvalue $0$; that is, $v(P_{i})=0$. Similarly, if $v(P_{i}) = 0$ for $i\neq 1$, then $v(P_{1})=1$. Hence, the admissibility of $v$ serves as a generalisation of the sum rule that corresponds to the physical interpretation of the measurement process.

Finally, we can answer the question `how ``large" or ``typical" is the set of value indefinite observables?'

\begin{theorem}
	[\cite{vi-aeverywhere-2014}]  The set of value indefinite observables has constructive
Lebesgue  measure one, that is, almost all observables are value indefinite.
\end{theorem}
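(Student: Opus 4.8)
The plan is to view the set of one-dimensional projection observables $\{P_\phi : \ket{\phi}\in\mathbb{C}^{n}\}$ as the complex projective space $\mathbb{CP}^{n-1}$, equipped with its canonical $U(n)$-invariant (Fubini--Study) probability measure; this is the natural normalised Lebesgue measure on the space of observables, and it may equivalently be realised as the normalised surface measure on the unit sphere $S^{2n-1}\subset\mathbb{C}^{n}\cong\mathbb{R}^{2n}$ pushed forward to rays. Throughout I would fix the preparation state $\ket{\psi}$, which is the only ingredient Theorem~\ref{EffecKS} requires.

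First I would apply Theorem~\ref{EffecKS} directly: for the fixed $\ket{\psi}$, every observable $P_\phi$ whose defining state satisfies $0<\abs{\braket{\psi}{\phi}}<1$ is value indefinite. Hence the value definite observables are confined to the exceptional set
\[
E \;=\; \{\,P_\phi : \abs{\braket{\psi}{\phi}}=0\,\}\;\cup\;\{\,P_\phi : \abs{\braket{\psi}{\phi}}=1\,\},
\]
that is, the observables whose state is either orthogonal or parallel to $\ket{\psi}$. The whole argument then reduces to showing that $E$ is negligible.

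Next I would show $E$ is null. The parallel component is the single ray $\{P_\psi\}$, a point in $\mathbb{CP}^{n-1}$ and thus of measure zero. The orthogonal component is the projectivisation of the hyperplane $\psi^{\perp}\cong\mathbb{C}^{n-1}$, i.e.\ a copy of $\mathbb{CP}^{n-2}$ sitting inside $\mathbb{CP}^{n-1}$ with real codimension two; equivalently, writing $f(\phi)=\abs{\braket{\psi}{\phi}}^{2}$, both pieces are level sets of the nonconstant real-analytic function $f$, and the zero set of a nonconstant real-analytic function on a connected manifold has Lebesgue measure zero. Therefore $\mathbb{P}(E)=0$ and the value indefinite observables have measure one.

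Finally, for the word \emph{constructive} I would make this null cover effective. Since $E$ is cut out by the computable algebraic constraints $\braket{\psi}{\phi}=0$ and $\abs{\braket{\psi}{\phi}}^{2}=1$ (with $\ket{\psi}$ a computable vector), I would exhibit, uniformly in a rational $\varepsilon>0$, a finite union of basic open balls about $E$ of total measure $<\varepsilon$, obtained by thickening these two varieties; this furnishes a computable sequence of rational open covers of $E$ whose measures converge effectively to $0$, which is precisely the statement that $E$ has constructive measure zero, and hence its complement constructive measure one. The main obstacle I anticipate is not the geometry --- the codimension count is immediate --- but the bookkeeping needed to certify the covers \emph{effectively}: one must verify that the tubes around $\{f=0\}$ and $\{f=1\}$ admit a description by rational data and that their measures are bounded by a computable function of $\varepsilon$, so that the exceptional set is genuinely constructively null rather than merely classically null.
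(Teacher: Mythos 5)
Your argument is correct, but note that the paper never actually proves this theorem: it is imported wholesale from \cite{vi-aeverywhere-2014}, so the comparison has to be made against that reference, and there your route is genuinely different --- in fact, logically inverted. You obtain measure one as a corollary of Theorem~\ref{EffecKS}: every $P_\phi$ with $0<\abs{\braket{\psi}{\phi}}<1$ is value indefinite (you silently, and correctly, repair the typo in the paper's statement of that theorem, which says ``$P_\psi$'' where it means ``$P_\phi$''), so the value definite observables are confined to the single point $\{P_\psi\}$ together with the projectivised hyperplane $\mathbb{CP}^{n-2}\subset\mathbb{CP}^{n-1}$, a set of real codimension two, which you then cover effectively using the computability of $\ket{\psi}$. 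Given Theorem~\ref{EffecKS} this is exactly right, and your caveat that constructivity requires $\ket{\psi}$ to be a computable vector is well placed (in the QRNG setting it is, e.g.\ $\ket{\pm 1}$). The proof in \cite{vi-aeverywhere-2014}, however, could not proceed this way: the full localisation you invoke is the later variant \cite{2015-AnalyticKS}, whereas at the time only the weaker localisation of \cite{abbott2012strongrandomness} was available, which certifies value indefiniteness only for overlaps $\abs{\braket{\psi}{\phi}}$ in a proper sub-range of $(0,1)$; the complement of that range is a full-dimensional set of positive measure, so the reference had to establish the measure-one statement directly, by a considerably more involved construction of Kochen--Specker-type obstructions combined with explicit effective covers. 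In short, your proof is the clean modern derivation, and it is legitimate inside this paper precisely because Theorem~\ref{EffecKS} is stated there in full strength; the cited paper's proof is longer but self-contained relative to the weaker 2012 localisation, which is what made ``almost everywhere'' a separate contribution at the time.
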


Theorem~\ref{EffecKS} paved the way to construct a class of QRNGs based on measuring value indefinite observables. How ``good" is such a QRNG?  The answer will  use the following

\begin{itemize}
	\item  {\bf epr  principle}: If a repetition of measurements of an
observable generates a computable sequence, then  these observables
are value definite.
\end{itemize}

Assume the  Eigenstate and epr    principles. An infinite repetition of the
experiment 
measuring  a quantum value indefinite observable   always generates an
incomputable   infinite sequence $x_1x_2\dots$. In fact, a stronger result is true  as we will show in Section~\ref{3biimm}. Informally, a sequence $ \mathbf{x}$ is bi-immune if  no algorithm can generate
infinitely many correct values of its elements (pairs, $(i, x_i)$).
% and is $b$-bi-immune if for every $a\in A_b$, no algorithm can generate
%infinitely many  pairs $(i, x_i=a)$ or  $(i, x_i\not= a)$.
 The formal definition is as follows.
A sequence $ \mathbf{x}\in A_{b}^{\omega}$  ($b\ge 2$)
 is  {\it bi-immune} if there is no
partially computable function $\varphi$ from $\bbbn $ to $A_{b}$ having an
infinite domain $\dom(\varphi)$ with the property that $\varphi(i)= x_i$ for
all $i\in \dom(\varphi)$~\cite{bienvenu2013}). 
In the binary case we have:

\begin{theorem}[\cite{abbott2012strongrandomness}]
\label{biimm_2}
Assume the  Eigenstate and epr    principles. An infinite repetition of the
experiment 
measuring  a quantum value indefinite observable  in $\mathbb{C}^{2}$ always generates a
 bi-immune   sequence $\x\in A_2^{\omega}$. 
\end{theorem}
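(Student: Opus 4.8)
The plan is to argue by contradiction, converting a hypothetical algorithm that correctly predicts infinitely many digits of $\x$ into a genuine computable sequence of measurement outcomes, which the epr principle then forbids. So I would suppose that $\x=x_1x_2\cdots\in A_2^{\omega}$ is \emph{not} bi-immune. By definition there is then a partially computable $\varphi:\bbbn\to A_2$ with infinite domain $\dom(\varphi)$ satisfying $\varphi(i)=x_i$ for every $i\in\dom(\varphi)$. The difficulty at this point is that $\dom(\varphi)$ is only guaranteed to be computably enumerable (c.e.), so I cannot directly read a computable sequence off $\varphi$: the predicted outcomes sit at an index set that I may be unable to enumerate in increasing order.

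The key computability step is to replace $\dom(\varphi)$ by an infinite \emph{computable} subset. Since every infinite c.e. set contains an infinite computable subset, I fix $D=\{d_1<d_2<\cdots\}\subseteq\dom(\varphi)$ with $D$ computable and with computable increasing enumeration $k\mapsto d_k$. Define $\y=y_1y_2\cdots$ by $y_k=\varphi(d_k)$. Because $d_k\in\dom(\varphi)$ the computation $\varphi(d_k)$ halts, and because $k\mapsto d_k$ is computable, the map $k\mapsto y_k$ is total computable; hence $\y\in A_2^{\omega}$ is a computable sequence. Moreover $y_k=\varphi(d_k)=x_{d_k}$, so $\y$ is precisely the sequence of outcomes of the subfamily of runs indexed by $D$ of the same experiment, that is, it is itself produced by an infinite repetition of measurements of the value indefinite observable.

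The physics then closes the argument. The measured observable is value indefinite (this is the hypothesis, guaranteed in the concrete setup by the eigenstate principle together with the localised Kochen-Specker theorem, Theorem~\ref{EffecKS}). On the other hand $\y$ is a computable sequence generated by an infinite repetition of measurements of that very observable, so the epr principle forces the observable to be value definite---a contradiction. Therefore no such $\varphi$ exists and $\x$ is bi-immune.

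I expect the main obstacle to be exactly the passage from \emph{``predicts infinitely many digits''} to \emph{``computes an honest infinite sequence of outcomes''}: a partially computable $\varphi$ with infinite but merely c.e. domain does not by itself expose a computable subsequence, and the whole argument hinges on the standard recursion-theoretic fact that an infinite c.e. set has an infinite computable subset. Once a computable index set $D$ is in hand, restricting $\varphi$ to $D$ is routine, and the remainder is a clean invocation of the epr principle. A secondary point I would state carefully is that selecting the runs indexed by $D$ still counts as ``a repetition of measurements of the observable'', so that the epr principle genuinely applies to $\y$; this is what upgrades the mere incomputability of $\x$ (the contrapositive of epr applied to $\x$ itself) to full bi-immunity.
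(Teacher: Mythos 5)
Your proof is correct and takes essentially the same route as the source: this paper does not actually prove Theorem~\ref{biimm_2} (it is quoted from \cite{abbott2012strongrandomness}), but both that reference and the paper's own sketch for the ternary analogue (Theorem~\ref{biimm_3}) argue exactly as you do --- a computably identifiable infinite set of positions with known outcomes would, via the epr principle, force value definiteness of the measured observables, contradicting value indefiniteness. Your explicit appeal to the fact that every infinite c.e.\ set contains an infinite computable subset is precisely the recursion-theoretic step implicit in the paper's phrase ``contain no infinite computable subset,'' so the two arguments coincide.
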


\subsection{Value definiteness and unpredictability}

Since probability spaces lie at the core of quantum mechanics, we can describe quantum behaviour in different contexts by utilising the probabilistic framework that the theoretical notion of the wave function characterisation provides; here, physical attributes correspond to projection operators and their corresponding eigenvalues.
However, the use of the \emph{eigenstate assumption} is restricted to contexts that contain the observable $P_{\psi}$, where $\ket{\psi}$ is the state in which the system was prepared.  For this reason, formalising the notion of predictability with respect to the value that corresponds to a given observable is required.

Consider a system that continuously repeats the process of state preparation and measurement, as in \cite{abbott2012strongrandomness}. Let $\textbf{x} = x_1x_2\ldots$ denote the infinite sequence produced by concatenating the outputs of the measurement performed at each iteration. Let $\mathcal{O,C}$ be a fixed set of observables and contexts, respectively, with $O_i,C_i$ denoting the observable and the corresponding context for the $i$-th measurement. 
We say that a measurement outcome is predictable if there exists a \emph{computable function} $f: \mathbb{N}\times\mathcal{O}\times\mathcal{C}\rightarrow\{0,1\}$ such that, for every iteration $i$ we have that $f(i,O_i,C_i)=x_i$.
Note that if every value of a sequence of measurement results is predictable, then the computability of $f$ ensures that there is some function that outputs the  values $x_i$ of $\textbf{x}$ corresponding to each iteration. However, an  incomputable $f$  provides no way of obtaining each term of the sequence and therefore offers no method of prediction \cite{Sipser:1996:ITC:524279}. Finally, following~\cite{abbott2012strongrandomness}, if such function exists, we assume there is a definite value associated with the sequence of observables used for computing each term of the function output; that is $f(i,O_i,C_i)=v_i(O_i,C_i)$. 

Theorem~\ref{biimm_2}  proves this form of unpredictability,  but leaves the possibility of finitely many exceptions. An even stronger result,  which removes this possibility,  was obtained by
using a non-probabilistic  model for unpredictability~\cite{acs-2015-info6040773,DBLP:conf/birthday/AbbottCS15}.  To this aim we consider an \emph{experiment} $E$ producing a single bit $x\in\{0,1\}$; with a particular  trial of $E$ we associate the parameter $\lambda$  (the state of the universe) which fully describes the trial;  $\lambda$ can be viewed as a resource from which one can extract finite information in order   to predict the outcome of the experiment $E$. The trials of $E$ generate a succession of events of the form ``$E$ is prepared, performed, the result recorded, $E$ is reset'', iterated finitely many times in an algorithmic fashion. 

 An \emph{extractor} is a physical device selecting a finite amount of information from  $\lambda$ without altering the experiment $E$; 
it  produces  a finite string of bits $\langle \lambda \rangle $.
A \emph{predictor} for $E$ is an algorithm   $P_E$
which \emph{halts} on every input and \emph{produces} {\bf 0} or {\bf 1}  or {\bf prediction withheld}.The predictor  $P_E$ can use as input the information $\langle \lambda \rangle $, but
must be \emph{passive,} that is, it must not disturb or interact with $E$ in any way.

A predictor $P_E$ provides a \emph{correct prediction} using the extractor $\langle \, \rangle$ for an instantiation of $E$ with parameter $\lambda$ if, when taking as input $\langle \lambda \rangle$,
it outputs 0 or 1 (i.e.\ it does not refrain from making a prediction) and the output is equal to $x$, the result of the experiment.
Fix an extractor $\langle \, \rangle$; the predictor $P_E$ is {\em $k,\langle \, \rangle$-correct} if there exists an $n\ge k$ such that when $E$ is repeated $n$ times with associated parameters $\lambda_1 ,\dots, \lambda_n$ producing the outputs $x_1,x_2,\dots ,x_n$, $P_E$ outputs the sequence $P_E(\langle\lambda_1\rangle), P_E(\langle\lambda_2\rangle),\dots ,P_E(\langle\lambda_n\rangle)$ with the following two properties:  (i) no prediction in the sequence is incorrect, and (ii) in the sequence there are $k$ correct predictions.
The confidence we have in a $k,\langle \, \rangle$-correct predictor increases as $k\to\infty$.
If $P_E$ is $k,\langle \, \rangle$-correct for all $k$, then $P_E$ never makes an incorrect prediction and the number of correct predictions can be made arbitrarily large by repeating $E$ enough times.

If $P_E$ is not $k,\langle \, \rangle$-correct  for all $k$, then we cannot exclude the possibility that any correct prediction $P_E$  makes is simply due to chance. Hence, we say that
the outcome $x$ of a single trial of the experiment $E$ performed with parameter $\lambda$ is \emph{predictable} (with certainty) if there exist an extractor $\langle \, \rangle$ and a predictor $P_E$ which is $k,\langle \, \rangle$-correct for all $k$, and $P_E(\langle\lambda\rangle)=x$.

Consider an experiment $E$ performed in dimension $n\ge 3$ Hilbert space in which a quantum system is prepared in a state $\ket{\psi}$ and a value indefinite observable $P_\phi$ is measured producing a single bit $x$.

%\begin{theorem}[\cite{acs-2015-info6040773,DBLP:conf/birthday/AbbottCS15}]
%	 Assume the epr and Eigenstate  principles.
% If $E$ is an experiment measuring  a quantum value indefinite observable, then for every predictor $P_E$ using any extractor $\xi$, $P_E$ is not correct for $\xi$.
%\end{theorem}

\begin{theorem}[\cite{acs-2015-info6040773,DBLP:conf/birthday/AbbottCS15}]  
\label{2unpredict}
Assume the epr and Eigenstate  principles.
%	In an infinite repetition of the experiment $E$
%measuring  a quantum value indefinite observable  in $\mathbb{C}^{2}$  which  generates the infinite sequence $x_1x_2\dots$, no single bit $x_i$ can be predicted.
Let $\x$ be an infinite sequence obtained by measuring  a quantum value indefinite observable  in $\mathbb{C}^{2}$  in an infinite repetition of the experiment $E$. Then no single bit $x_i$ can be predicted.

\end{theorem}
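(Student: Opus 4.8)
The plan is to argue by contradiction, reducing the predictability of a single outcome to the value definiteness of the measured observable, which is ruled out by the Localised Kochen--Specker Theorem (Theorem~\ref{EffecKS}). Fix an index $i$ and suppose, for contradiction, that the bit $x_i$---produced in the $i$-th trial of $E$ by measuring the value indefinite observable $P_\phi$ on a system prepared in the state $\ket{\psi}$ with $0<\abs{\bra{\psi}\ket{\phi}}<1$---is predictable with certainty. By the definition of predictability in the non-probabilistic model, this yields an extractor $\langle\,\rangle$ and a predictor $P_E$ that is $k,\langle\,\rangle$-correct for all $k$ and satisfies $P_E(\langle\lambda_i\rangle)=x_i$, where $\lambda_i$ is the parameter of that trial.

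The first step is to extract the operational content of being $k,\langle\,\rangle$-correct for all $k$. Such a predictor never issues an incorrect prediction, and the number of its correct predictions can be made arbitrarily large by repeating $E$; hence its prediction $P_E(\langle\lambda_i\rangle)=x_i$ is not an accidental coincidence but is correct with certainty. Moreover $P_E$ is \emph{passive}: it consumes only the finite string $\langle\lambda_i\rangle$ and neither disturbs nor interacts with $E$.

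The central step is the bridge from certain prediction to definiteness of value. Since $P_E$ predicts the outcome of measuring $P_\phi$ with certainty and without in any way disturbing the system, the Einstein--Podolsky--Rosen reality criterion---the principle underlying the assumed Eigenstate and epr principles---forces a definite value of this physical quantity to exist prior to observation; that is, the value assignment must satisfy $v(P_\phi)=x_i$, so $P_\phi$ is value definite under $v$. This contradicts Theorem~\ref{EffecKS}: under admissibility, non-contextuality and the Eigenstate principle, and because $\ket{\phi}$ is neither orthogonal nor parallel to $\ket{\psi}$, the observable $P_\phi$ is value indefinite, i.e.\ $v(P_\phi)$ is undefined. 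Hence $x_i$ cannot be predicted; as $i$ was arbitrary, no single bit of $\x$ can be predicted.

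The main obstacle is exactly this bridging step. One must argue rigorously that correctness for all $k$ genuinely encodes certainty---so that a successful prediction cannot be dismissed as a fortunate guess---and then that a passive, algorithmically certain prediction legitimately witnesses a pre-existing definite value in the sense of the reality criterion. This is where the physical assumptions do the real work, converting an epistemic statement about an algorithm into an ontic statement about the observable; the remaining steps are a routine unfolding of the definitions and a direct appeal to Theorem~\ref{EffecKS}.
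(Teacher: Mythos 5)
Your core argument --- assume a predictor that is $k,\langle\,\rangle$-correct for all $k$, observe that such a predictor never errs so each of its predictions is correct with certainty, bridge from certain passive prediction to a pre-existing definite value, and contradict value indefiniteness --- is exactly the argument of the cited sources for this theorem, and it also matches the ``physical explanation'' the paper gives when it proves its own binary analogue in Section~\ref{2} (there the paper's primary contradiction is with bi-immunity, via Corollary~\ref{ex.3to2}, and the bridge to value definiteness is offered as a secondary, physical reading). So the structure and the key bridging step are the intended ones.

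One step, however, fails as written: you invoke Theorem~\ref{EffecKS} to conclude that $P_\phi$ is value indefinite. Theorem~\ref{EffecKS}, like the Kochen--Specker theorem it localises, holds only in Hilbert spaces of dimension $n\ge 3$, while the statement you are proving places the measurement in $\mathbb{C}^{2}$, where the paper itself stresses that the Kochen--Specker theorem is false; value indefiniteness simply cannot be certified there. The repair is immediate and costs nothing: value indefiniteness of the measured observable is part of the theorem's \emph{hypothesis} (``measuring a quantum value indefinite observable''), so the contradiction should be drawn directly against that hypothesis, with no appeal to Theorem~\ref{EffecKS} at all. (The tension you inherited is really in the paper: the paragraph preceding the theorem sets up the experiment $E$ in dimension $n\ge 3$, where Theorem~\ref{EffecKS} does apply, while the theorem statement says $\mathbb{C}^{2}$.) A smaller caution of the same kind: the formal epr principle assumed by the paper says only that a \emph{computable} sequence of outcomes forces value definiteness; the step ``certain prediction by a passive predictor implies a definite value'' rests on the Einstein--Podolsky--Rosen reality criterion, which the paper quotes as motivation rather than as a named assumption, so it is worth saying explicitly --- as you essentially do --- that this criterion is what licenses the bridge. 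With the contradiction redirected at the hypothesis, your proof is the intended one.
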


\section{A QNRG based on localised value indefiniteness}
\label{qrng1}

A blueprint for a QRNG based on Theorem~\ref{EffecKS} was proposed in~\cite{abbott2012strongrandomness} using a 
generalised beam splitter and a physical realisation
with superconducting transmon qutrits was given in~\cite{PhysRevLett.119.240501}. 
As Theorems~\ref{KS} and~\ref{EffecKS} are true {\em only} in  Hilbert spaces of dimension $n\ge 3$, any QRNG based on them produces sequences over alphabets with at least three elements. As a consequence, a QRNG using the classical beam splitter is not certified by these theorems.

{\it The QRNG operates in a  succession of events of the form ``preparation,  measurement,
 reset'', iterated indefinitely many times in an algorithmic fashion},\cite{abbott2012strongrandomness}. Let $\textbf{x} = x_1x_2\ldots$ denote the infinite sequence produced by concatenating the consecutive outputs of infinitely many events as described above.

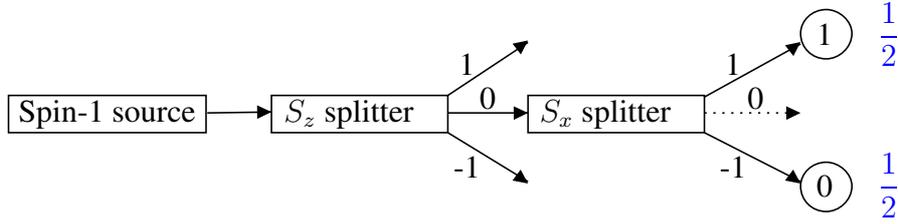
\begin{figure}[ht]
\centering
\tikzset{every picture/.style={line width=0.60pt}} %set default line width to 0.75pt        
\begin{tikzpicture}[x=0.6pt,y=0.6pt,yscale=-1,xscale=1]
\draw   (26,129) -- (149.2,129) -- (149.2,152.4) -- (26,152.4) -- cycle ;
\draw    (150,140) -- (190,139.42) ;
\draw [shift={(190,139.4)}, rotate = 539.51] [fill={rgb, 255:red, 0; green, 0; blue, 0 }  ][line width=0.75]  [draw opacity=0] (8.93,-4.29) -- (0,0) -- (8.93,4.29) -- cycle    ;
\draw   (190,129) -- (300,129) -- (300,152.4) -- (190,152.4) -- cycle ;
\draw    (300,140) -- (350,140) ;
\draw [shift={(350,140)}, rotate = 539.51] [fill={rgb, 255:red, 0; green, 0; blue, 0 }  ][line width=0.75]  [draw opacity=0] (8.93,-4.29) -- (0,0) -- (8.93,4.29) -- cycle    ;
\draw   (350,129) -- (460,129) -- (460,152.4) -- (350,152.4) -- cycle ;
\draw [dash pattern={on 0.84pt off 2.51pt}](460,140) -- (520,140) ;
\draw [shift={(520,140)}, rotate = 539.51] [fill={rgb, 255:red, 0; green, 0; blue, 0 }  ][line width=0.75]  [draw opacity=0] (8.93,-4.29) -- (0,0) -- (8.93,4.29) -- cycle    ;
\draw    (300,152.4) -- (350,183.21) ;
\draw [shift={(350,184.4)}, rotate = 216.66] [fill={rgb, 255:red, 0; green, 0; blue, 0 }  ][line width=0.75]  [draw opacity=0] (8.93,-4.29) -- (0,0) -- (8.93,4.29) -- cycle    ;
\draw    (300,129) -- (350,95.62) ;
\draw [shift={(350,94.4)}, rotate = 502.44] [fill={rgb, 255:red, 0; green, 0; blue, 0 }  ][line width=0.75]  [draw opacity=0] (8.93,-4.29) -- (0,0) -- (8.93,4.29) -- cycle    ;
\draw    (460,130) -- (520,96.62) ;
\draw [shift={(520,95.4)}, rotate = 502.44] [fill={rgb, 255:red, 0; green, 0; blue, 0 }  ][line width=0.75]  [draw opacity=0] (8.93,-4.29) -- (0,0) -- (8.93,4.29) -- cycle    ;
\draw    (460,152.4) -- (520,183.21) ;
\draw [shift={(520,184.4)}, rotate = 216.66] [fill={rgb, 255:red, 0; green, 0; blue, 0 }  ][line width=0.75]  [draw opacity=0] (8.93,-4.29) -- (0,0) -- (8.93,4.29) -- cycle    ;
\draw   (520,90) .. controls (520,82) and (527,75) .. (536,75) .. controls (544,75) and (552,82) .. (552,90) .. controls (552,99) and (544,106) .. (536,106) .. controls (527,106) and (520,99) .. (520,90) -- cycle ;
\draw   (520,186) .. controls (520,178) and (527,171) .. (536,171) .. controls (544,171) and (552,178) .. (552,186) .. controls (552,195) and (544,202) .. (536,202) .. controls (527,202) and (520,195) .. (520,186) -- cycle ;
\draw (87.6,140.7) node  [align=left] {Spin-1 source};
\draw (240.6,140.7) node  [align=left] {$\displaystyle S_{z}$ splitter };
\draw (400.6,140.7) node  [align=left] {$\displaystyle S_{x}$ splitter };
\draw (312,109) node  [align=left] {1};
\draw (312,174) node  [align=left] {\mbox{-}1};
\draw (325,131) node  [align=left] {0};
\draw (535,90.7) node  [align=left] {1};
\draw (535,185.7) node  [align=left] {0};
\draw (478,109) node  [align=left] {1};
\draw (478,174) node  [align=left] {\mbox{-}1};
\draw (492,131) node  [align=left] {0};
\draw (575,90.7) node  [align=left] {$\displaystyle\color{blue}{\frac{1}{2}}$};
\draw (575,185.7) node  [align=left] {$\displaystyle\color{blue}{\frac{1}{2}}$};
\end{tikzpicture}
\caption{QRNG setup proposed in \cite{abbott2012strongrandomness}; the values $\frac{1}{2},\frac{1}{2}$ (in blue) correspond to the outcome probabilities}
\label{fig1}
\end{figure}

From Theorem~\ref{EffecKS}  a system prepared on an arbitrary state $\ket{\psi}$ must have a definite value associated to the operator $P_{\psi}$. Hence, for spin-1 particles prepared in the state $S_{z} = 0$,  this operator is value definite.
As the possible outcomes of an observable $O$ correspond to the eigenvalues $o$ of the projectors that describe the spectral decomposition $O = \sum_o oP_{o}$,
we deduce  that the state $\ket{S_{z} = 0}$ is an eigenstate of the projector $S_{x} = 0$, i.e.~$\ket{0}\bra{0}$, with eigenvalue $0$; so, the probability of obtaining this outcome is 0.
For this reason, $S_{x} = \pm 1$ are the only results we need to consider for now. Furthermore, we have that $\braket{S_{z}}{S_{x}} = \braket{0}{\pm 1} = \frac{1}{\sqrt{2}}$; so, by the previous results, it is not possible to assign a definite value to  $S_{x} = \pm 1$.

To date, \emph{this QRNG is the only example of a random generator provably better than any PRNG.} 

An experimental study~\cite{Abbott_2019}  of the realisation~\cite{PhysRevLett.119.240501} of this QRNG has used various tests to compare it with arguably the best  PRNGs. While the analysis failed to observe a strong advantage of the quantum random sequences due to incomputability, the results are  informative: some of the test results are ambiguous and require further study, others highlight difficulties that can guide the development of future tests of algorithmic randomness and incomputability, and, more importantly,  ideas for improvement of the design of QRNG based  on Theorem~\ref{EffecKS} have emerged. One such idea, developed in the following section, is to eliminate the problematic branch  $S_x=0$  in Figure~\ref{fig1} which has probability zero. Why problematic? In standard measure-theoretic formulation of probability \cite{feller1} it is possible for a non-empty event to have probability zero, hence, events of probability zero are not necessarily impossible.

\section{A new QRNG based on localised value indefiniteness}
\label{qrng2}

To address the above problem  we  propose  a new QRNG setup, based on the blueprint, with a
 different state preparation, see Figure~\ref{fig2}. 

\begin{figure}
\centering
\tikzset{every picture/.style={line width=0.60pt}} %set default line width to 0.75pt        
\begin{tikzpicture}[x=0.6pt,y=0.6pt,yscale=-1,xscale=1]
\draw   (26,129) -- (149.2,129) -- (149.2,152.4) -- (26,152.4) -- cycle ;
\draw    (150,140) -- (190,139.42) ;
\draw [shift={(190,139.4)}, rotate = 539.51] [fill={rgb, 255:red, 0; green, 0; blue, 0 }  ][line width=0.75]  [draw opacity=0] (8.93,-4.29) -- (0,0) -- (8.93,4.29) -- cycle    ;
\draw   (190,129) -- (300,129) -- (300,152.4) -- (190,152.4) -- cycle ;
\draw    (300,140) -- (350,140) ;
\draw [shift={(350,140)}, rotate = 539.51] [fill={rgb, 255:red, 0; green, 0; blue, 0 }  ][line width=0.75]  [draw opacity=0] (8.93,-4.29) -- (0,0) -- (8.93,4.29) -- cycle    ;
\draw   (350,129) -- (460,129) -- (460,152.4) -- (350,152.4) -- cycle ;
\draw (460,140) -- (520,140) ;
\draw [shift={(520,140)}, rotate = 539.51] [fill={rgb, 255:red, 0; green, 0; blue, 0 }  ][line width=0.75]  [draw opacity=0] (8.93,-4.29) -- (0,0) -- (8.93,4.29) -- cycle    ;
\draw    (460,130) -- (520,96.62) ;
\draw [shift={(520,95.4)}, rotate = 502.44] [fill={rgb, 255:red, 0; green, 0; blue, 0 }  ][line width=0.75]  [draw opacity=0] (8.93,-4.29) -- (0,0) -- (8.93,4.29) -- cycle    ;
\draw    (460,152.4) -- (520,183.21) ;
\draw [shift={(520,184.4)}, rotate = 216.66] [fill={rgb, 255:red, 0; green, 0; blue, 0 }  ][line width=0.75]  [draw opacity=0] (8.93,-4.29) -- (0,0) -- (8.93,4.29) -- cycle    ;
\draw   (520,90) .. controls (520,82) and (527,75) .. (536,75) .. controls (544,75) and (552,82) .. (552,90) .. controls (552,99) and (544,106) .. (536,106) .. controls (527,106) and (520,99) .. (520,90) -- cycle ;
\draw   (520,138) .. controls (520,130) and (527,123) .. (536,123) .. controls (544,123) and (552,130) .. (552,138) .. controls (552,147) and (544,154) .. (536,154) .. controls (527,154) and (520,147) .. (520,138) -- cycle ;
\draw   (520,186) .. controls (520,178) and (527,171) .. (536,171) .. controls (544,171) and (552,178) .. (552,186) .. controls (552,195) and (544,202) .. (536,202) .. controls (527,202) and (520,195) .. (520,186) -- cycle ;
\draw (87.6,140.7) node  [align=left] {Spin-1 source};
\draw (240.6,140.7) node  [align=left] {$\displaystyle S_{z}$ splitter };
\draw (400.6,140.7) node  [align=left] {$\displaystyle S_{x}$ splitter };
\draw (325,121) node  [align=left] {$\ket{\psi}$};
\draw (535,90.7) node  [align=left] {0};
\draw (535,137.7) node  [align=left] {1};
\draw (535,185.7) node  [align=left] {2};
\draw (478,109) node  [align=left] {1};
\draw (478,174) node  [align=left] {\mbox{-}1};
\draw (492,131) node  [align=left] {0};
%\draw (575,90.7) node  [align=left] {$\displaystyle\color{blue}{\frac{1}{4}}$};
%\draw (575,137.7) node  [align=left] {$\displaystyle\color{blue}{\frac{1}{2}}$};
%\draw (575,185.7) node  [align=left] {$\displaystyle\color{blue}{\frac{1}{4}}$};
%\draw (615,90.7) node  [align=left] {$\displaystyle\color{red}{\frac{1}{3}}$};
%\draw (615,137.7) node  [align=left] {$\displaystyle\color{red}{\frac{1}{3}}$};
%\draw (615,185.7) node  [align=left] {$\displaystyle\color{red}{\frac{1}{3}}$};
\draw (575,90.7) node  [align=left] {$\displaystyle\color{blue}{\frac{1}{4}}$};
\draw (575,137.7) node  [align=left] {$\displaystyle\color{blue}{\frac{1}{2}}$};
\draw (575,185.7) node  [align=left] {$\displaystyle\color{blue}{\frac{1}{4}}$};
\end{tikzpicture}
\caption{Blueprint for a new QRNG; the values $\frac{1}{4}, \frac{1}{2},\frac{1}{4}$ (in blue)  correspond to the outcome probabilities of setups prepared in the  state $\ket{\psi} = \ket{\pm 1}$}
\label{fig2}
\end{figure}
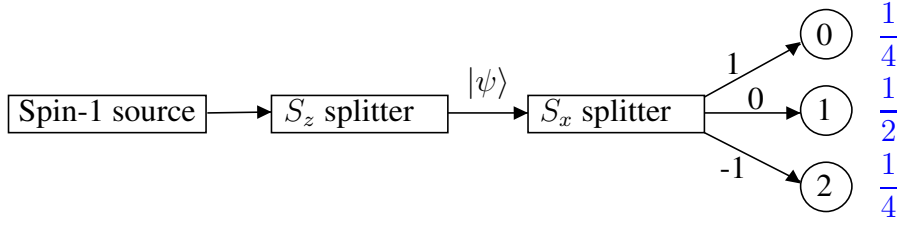

\subsection{A generalised spin observable}

The property  \emph{spin} ($\mathbf{S}$) is the intrinsic angular momentum characteristic of elementary particles. By deriving the spin state operator $S_x$ we can control the effect of the preparation state $\ket{S_z}$ on the outcome probabilities. We refer to the eigenvalue $s$ of $\mathbf{S}^2$ as the spin (quantum) number~\cite{Merzbacher:1998aa,svozil-pu2018}.
For a spin-1 particle, the eigenvalues of $S_{z}$ are $1,0,-1$, thus introducing an orthonormal Cartesian standard basis $\{\ket{1} , \ket{0}, \ket{-1}\}$ defined by $S_z\ket{m} = \hbar m\ket{m}$ it follows that 
$$ S_z = \hbar 
\begin{pmatrix}
1 & 0 & 0\\
0 & 0 & 0 \\
0 & 0 &  -1
\end{pmatrix}
\rb.$$
From $S_\pm\ket{m}=\sqrt{s(s+1)-m(m\pm 1)}\ket{m\pm 1} $ we obtain the raising and lowering operators for $s=1$

 $$S_+\ket{m} = \hbar\sqrt{2-m(m+1)}\ket{m+1} ,$$ $$S_-\ket{m} = \hbar\sqrt{2-m(m-1)}\ket{m-1}.$$ Consequently, we have 
 
 $$ S_+ = 
 \begin{pmatrix}
 \bra{1}S_+\ket{1}&  \bra{1}S_+\ket{0}& \bra{1}S_+\ket{-1} \\
 \bra{0}S_+\ket{1} & \bra{0}S_+\ket{0} &  \bra{0}S_+\ket{-1}\\
  \bra{-1}S_+\ket{1}& \bra{-1}S_+\ket{0} &  \bra{-1}S_+\ket{-1}
 \end{pmatrix}
 = \sqrt{2}\hbar
 \begin{pmatrix} 0&1&0\\0&0&1\\0&0&0\end{pmatrix}\rb,$$
  
 $$ S_- = 
 \begin{pmatrix}
 \bra{1}S_-\ket{1}&  \bra{1}S_-\ket{0}& \bra{1}S_-\ket{-1} \\
 \bra{0}S_-\ket{1} & \bra{0}S_-\ket{0} &  \bra{0}S_-\ket{-1}\\
  \bra{-1}S_-\ket{1}& \bra{-1}S_-\ket{0} &  \bra{-1}S_-\ket{-1}
 \end{pmatrix}
 = \sqrt{2}\hbar
 \begin{pmatrix}0&0&0\\1&0&0\\0&1&0 \end{pmatrix}\rb.$$ 
 
 Furthermore, since $S_\pm = S_x \pm iSy$, we get $S_x = \frac{1}{2}(S_+ + S_-)$ and $S_y = \frac{1}{2i} (S_- - S_+)$, it follows that
 $$S_x = \frac{1}{\sqrt{2}}\hbar
 \begin{pmatrix}
0&1&0\\
1&0&1\\
0&1&0 
 \end{pmatrix},\text{ }
S_y = \frac{1}{\sqrt{2}}\hbar
 \begin{pmatrix}
0&-i&0\\
i&0&-i\\
0&i&0 
 \end{pmatrix}\rb.
 $$ Thus, the generalised Pauli matrices for a spin-1 particle are given by $\textbf{S} = (S_x,S_y,S_z) = \hbar\pmb{\sigma}$:
 $$ 
 \sigma_x = \frac{1}{\sqrt{2}}
 \begin{pmatrix}
0&1&0\\
1&0&1\\
0&1&0 
 \end{pmatrix}\rb,
\text{ }
 \sigma_y = \frac{1}{\sqrt{2}}
  \begin{pmatrix}
0&-i&0\\
i&0&-i\\
0&i&0 
 \end{pmatrix}\rb,
\text{ }
 \sigma_z = 
 \begin{pmatrix}
1 & 0 & 0\\
0 & 0 & 0 \\
0 & 0 &  -1
\end{pmatrix}\rb.
 $$ 
 We can now consider the description of spin states that point in arbitrary directions specified by the unit vector $\textbf{u} = (u_x, u_y, u_z) = (\sin\theta\cos\phi , \sin\theta\sin\phi, \cos\theta)$, where $\theta ,\phi$ are the polar and azimuthal angles; we then define the spin observable operator $\textbf{S}$ as a triplet of operators $\textbf{S} = (S_x,S_y,S_z) = \hbar\pmb{\sigma}$. Then, by adopting units in which $\hbar$ is numerically equal to unity, in order to reduce the amount of numerical clutter, we obtain the generalised spin observable operator that describes the measurement context:
 $$S(\theta,\phi) = \textbf{u}\cdot\textbf{S} =
 \begin{pmatrix}
 u_z & \frac{ u_x - iu_y}{\sqrt{2}} & 0\\   
\frac{ u_x + iu_y}{\sqrt{2}} & 0 & \frac{u_x - iu_y}{\sqrt{2}}\\
 0 & \frac{u_x + iu_y}{\sqrt{2}} & - u_z
 \end{pmatrix}\rb,$$ that is,
$$ S(\theta, \phi) = 
\begin{pmatrix}
\cos(\theta) & \frac{e^{-i\phi}\sin(\theta)}{\sqrt{2}} & 0\\
\frac{e^{i\phi}\sin(\theta)}{\sqrt{2}} & 0 & \frac{e^{-i\phi}\sin(\theta)}{\sqrt{2}}\\
0 & \frac{e^{i\phi}\sin(\theta)}{\sqrt{2}}&  -\cos(\theta)
\end{pmatrix}\rb.$$ 

Note that $S_z$ is given by $S(0,0)$ and $S_x$ by $S(\frac{\pi}{2},0).$

\subsection{State preparation and outcome probabilities}

By considering the orthonormal Cartesian standard basis $\ket{1} = (1,0,0), \ket{0} = (0,1,0)$ and $\ket{-1} = (0,0,1)$
we can obtain the eigenvalues $\{-1,0,1\}$ of $S_x$ by solving the equation
%$$ det(S_x - I\lambda )  = 
%\begin{vmatrix}
%-\lambda & \frac{1}{\sqrt{2}} & 0\\
%\frac{1}{\sqrt{2}} & -\lambda & \frac{1}{\sqrt{2}} \\
%0 & \frac{1}{\sqrt{2}} & -\lambda
%\end{vmatrix}
% = 0 \implies -\lambda (\lambda^2 - \frac{1}{2}) + \frac{1}{2}\lambda = 0.$$
% 
 
 $$ det(S_x - I\lambda )  = 
\begin{vmatrix}
-\lambda & \frac{1}{\sqrt{2}} & 0\\
\frac{1}{\sqrt{2}} & -\lambda & \frac{1}{\sqrt{2}} \\
0 & \frac{1}{\sqrt{2}} & -\lambda
\end{vmatrix}
 = 0,$$
 
\noindent that is, $-\lambda (\lambda^2 - \frac{1}{2}) + \frac{1}{2}\lambda = 0.$ Consequently
 we have:
 \begin{itemize}
\item[] $S_x\ket{S_x:1} =\ket{S_x:1} \implies  \ket{S_x:+1} = \frac{1}{2} (1,\sqrt{2},1)   ,$
\item[] $S_x\ket{S_x:0} =0 \implies  \ket{S_x:0} = \frac{1}{\sqrt{2}} (1,0,-1)   ,$
\item[] $S_x\ket{S_x:-1} =-\ket{S_x:-1} \implies  \ket{S_x:+1} = \frac{1}{2} (1,-\sqrt{2},1)   .$
 \end{itemize}
 
 We are now able to form the unitary matrix $U_{x}$ corresponding to the spin state operator $S_x$
 $$ U_x = \frac{1}{2}
 \begin{pmatrix}
 1 & \sqrt{2} & 1\\
 \sqrt{2} & 0 & -\sqrt{2}\\
 1 & -\sqrt{2} & 1
 \end{pmatrix}\rb,$$
  
\noindent The fact that  $U_x$ can be decomposed into two-dimensional transformations \cite{clements_optimal_2016} enables the physical realisation of the unitary operator by a lossless beam splitter \cite{PhysRevLett.73.58,yurke-86}  leading to  the implementation of a QRNG, as in~\cite{PhysRevLett.119.240501}, with the new outcome probabilities.
For simplicity we adopt the following convention:
 \begin{itemize}
\item[] $ \ket{1_x} = \ket{S_x:+1} = \frac{1}{2}\ket{1} + \frac{1}{\sqrt{2}}\ket{0} + \frac{1}{2}\ket{-1}\rb,$
\item[] $ \ket{0_x} = \ket{S_x:0} = \frac{1}{\sqrt{2}}\ket{1} - \frac{1}{\sqrt{2}}\ket{-1}\rb,$
\item [] $ \ket{-1_x}  =  \ket{S_x:+1} =\frac{1}{2}\ket{1} - \frac{1}{\sqrt{2}}\ket{0} + \frac{1}{2}\ket{-1}\rb.$
 \end{itemize}
 
Consider the probability distribution $\frac{1}{4},\frac{1}{2},\frac{1}{4}\rb.$
We can identify a possible corresponding state preparation $\ket{\psi}$ by solving the following system of equations:

\begin{itemize}
\item[] $|\frac{1}{2}x + \frac{1}{\sqrt{2}}y + \frac{1}{2}z |= \frac{1}{2}\rb,$
\item[] $|\frac{1}{\sqrt{2}}x - \frac{1}{\sqrt{2}}z|= \frac{1}{\sqrt{2}}\rb,$
\item[] $|\frac{1}{2}x - \frac{1}{\sqrt{2}}y + \frac{1}{2}z |= \frac{1}{2}\rb,$
\end{itemize}

\noindent where $x=\braket{1}{\psi}, y=\braket{0}{\psi}, z=\braket{-1}{\psi}$.
Setting $y=0,	z=1-x$ satisfies such constrains and  provides $\ket{+1},\ket{-1}$ and $\frac{\ket{+}-\ket{-}}{\sqrt{2}}$\rb, where $\ket{\pm} = \frac{1}{\sqrt{2}}\ket{0} \pm \frac{1}{\sqrt{2}}\ket{1}$, as preparation state candidates. Since $\ket{+1}$ and $\ket{-1}$ are eigenstates of $S_z$ they represent a natural choice for our QRNG construction.
We ensure the validity of these states by first noting that

\begin{itemize}
 \item[] $\braket{1_x}{1} = \frac{1}{2}\rb, \;\; \braket{1_x}{-1} = \frac{1}{2}\rb,$
 \item[] $\braket{0_x}{1} = \frac{1}{\sqrt{2}}\rb,\;\; \braket{0_x}{-1} = \frac{-1}{\sqrt{2}}\rb,$
 \item[] $\braket{-1_x}{1} = \frac{1}{2}\rb,\;\; \braket{-1_x}{-1} = \frac{1}{2}\rb.$
\end{itemize}

Then, for $\ket{\psi}=\ket{\pm 1}$ we have
$$\braket{1_x}{\psi} = \frac{1}{2},\; \braket{0_x}{\psi} = \pm\frac{1}{\sqrt{2}},\; \braket{-1_x}{\psi} = \frac{1}{2}\rb.$$

Furthermore, we have 
\begin{itemize}
\item[] $\braket{1_x}{+} = \frac{1}{\sqrt{2}}(\braket{1_x}{0} + \braket{1_x}{1}) =\frac{1}{\sqrt{2}}(\frac{1}{\sqrt{2}} + \frac{1}{2}) = \frac{1}{2}+\frac{1}{2\sqrt{2}}$\rb,
\item[] $\braket{1_x}{-} = \frac{1}{\sqrt{2}}(\braket{1_x}{0} - \braket{1_x}{1}) =\frac{1}{\sqrt{2}}(\frac{1}{\sqrt{2}} - \frac{1}{2}) = \frac{1}{2}- \frac{1}{2\sqrt{2}};$\\

\item[] $\braket{0_x}{+}=\frac{1}{\sqrt{2}}(\braket{0_x}{0}+\braket{0_x}{1})=\frac{1}{2}\rb,$
\item[] $\braket{0_x}{-} =\frac{1}{\sqrt{2}}(\braket{0_x}{0} - \braket{0_x}{1}) = -\frac{1}{2};$\\

\item[] $\braket{-1_x}{+} = \frac{1}{\sqrt{2}}(\braket{-1_x}{0} + \braket{-1_x}{1}) =\frac{1}{\sqrt{2}}(-\frac{1}{\sqrt{2}} + \frac{1}{2}) = -\frac{1}{2}+\frac{1}{2\sqrt{2}}\rb,$
\item[] $\braket{-1_x}{-} = \frac{1}{\sqrt{2}}(\braket{-1_x}{0} - \braket{1_x}{1}) =\frac{1}{\sqrt{2}}(-\frac{1}{\sqrt{2}} - \frac{1}{2}) = -\frac{1}{2}-\frac{1}{2\sqrt{2}}\rb.$
\end{itemize}

Thus, for $\ket{\psi}={\frac{\ket{+}-\ket{-}}{\sqrt{2}}}$\rb, we have:
 \begin{itemize}
 \item[] $\braket{1_x}{\psi} = \frac{1}{\sqrt{2}}(\braket{1_x}{+} - \braket{1_x}{-}) = \frac{1}{\sqrt{2}}( \frac{1}{2}+\frac{1}{2\sqrt{2}} - \frac{1}{2}+ \frac{1}{2\sqrt{2}}) = \frac{1}{2}$\rb,
 \item[] $\braket{0_x}{\psi} = \frac{1}{\sqrt{2}}(\braket{0_x}{+} - \braket{0_x}{-}) = \frac{1}{\sqrt{2}}( \frac{1}{2} + \frac{1}{2}) = \frac{1}{\sqrt{2}}$\rb,
 \item[] $\braket{-1_x}{\psi} = \frac{1}{\sqrt{2}}(\braket{-1_x}{+} - \braket{-1_x}{-}) = \frac{1}{\sqrt{2}}( -\frac{1}{2}+\frac{1}{2\sqrt{2}} + \frac{1}{2}+ \frac{1}{2\sqrt{2}}) = \frac{1}{2}$\rb.
 \end{itemize}
 
Hence, by  the third postulate of quantum mechanics, we obtain the following probabilities for $\ket{\psi}\in\{\ket{+ 1},\ket{-1},\frac{\ket{+}-\ket{-}}{\sqrt{2}}\}$:
 \begin{itemize}
 \item[] $p(S_{x,1}) = |\braket{1_x}{\psi}|^2 = \frac{1}{4}\rb,$
 \item[] $p(S_{x,0}) = |\braket{0_x}{\psi}|^2 = \frac{1}{2}\rb,$
 \item[] $p(S_{x,-1}) = |\braket{-1_x}{\psi}|^2 = \frac{1}{4}\rb.$
 \end{itemize}

 From these results it is clear that the preparation states $\ket{+1}$, $\ket{-1}$ and $\frac{\ket{+}-\ket{-}}{\sqrt{2}}$ for obtaining the outcome probabilities $\frac{1}{4}\rb ,\frac{1}{2}\rb ,\frac{1}{4}$ satisfy the requirements of Theorem~\ref{EffecKS}. Furthermore, as only the preparation state $\ket{S_z}$ is modified, the unitary matrix $U_x$ remains unaltered, thus enabling the physical realisation of this QRNG.
\medskip

{\it In what follows by QRNG will mean the QRNG constructed in this section.}

\subsection{Ternary 3-bi-immunity}
\label{3}

In this section we study the main properties of quantum random sequences produced by the proposed QRNG:  3-bi-immunity, unpredictability and Borel normality.

\subsection{Ternary 3-bi-immunity}
\label{3biimm}
Theorem~\ref{biimm_2} holds true also for ternary quantum random sequences, but a stronger result is true.
Informally, a sequence $\mathbf{x}\in A_{b}^{\omega}$ is $b$-bi-immune if for every $a\in A_b$, no algorithm can generate
infinitely many  pairs $(i, x_i=a)$ or  $(i, x_i\not= a)$. Formally, 
following~\cite{biimunity2020}, we say that a sequence $\mathbf{x}\in A_{b}^{\omega}$ is {\it $b$-bi-immune} if for every $a\in A_{b}$ the support  $\x^{-1}(a)=\{i\in\bbbn\mid x_i=a\}$  is
    bi-immune in the sense of computability theory~\cite{rogers1}, i.e.~the set and its complement contain no infinite computable subset. Obviously, $b$-bi-immunity is stronger than bi-immunity which is stronger than incomputability. 
    
 Consider a ternary sequence $\textbf{x}=x_1x_2\ldots \in A_3^{\omega}$ generated by the  QRNG.   
Then, for every $a\in A_3$ the set  $\x^{-1}(a)=\{i\in\bbbn\mid x_i=a\}$   and its complement contain no infinite computable subset because otherwise a definite value would need to be assigned to the observables corresponding to the measurement outputs contradicting the construction of the  QRNG (Theorem~\ref{EffecKS}). We have:

\begin{theorem}
\label{biimm_3}
 Assume the  Eigenstate and epr    principles. Then, every sequence generated by the QRNG is 3-bi-immune. 
\end{theorem}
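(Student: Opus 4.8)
The plan is to prove the statement by contradiction, fixing a single outcome value and reducing any putative computable regularity in the sequence to the value definiteness forbidden by Theorem~\ref{EffecKS}, the bridge being the epr principle. First I would unwind the definition: to show that $\x$ is 3-bi-immune it suffices to fix an arbitrary $a\in A_3$ and prove that the support $\x^{-1}(a)$ is bi-immune, i.e.\ that neither $\x^{-1}(a)$ nor its complement $\bbbn\setminus\x^{-1}(a)$ contains an infinite computable subset. So the whole argument reduces to taking an arbitrary infinite computable set $S=\{s_1<s_2<\cdots\}$ and deriving a contradiction from each of the two inclusions $S\subseteq\x^{-1}(a)$ and $S\subseteq\bbbn\setminus\x^{-1}(a)$.

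The key reframing is to track, at each measurement step $i$, the one-dimensional projector $P_a^{(i)}$ whose eigenvalue $1$ records the ternary outcome $a$, so that the measurement outcome of $P_a^{(i)}$ is exactly the indicator $[x_i=a]$, equal to $1$ when $x_i=a$ and to $0$ otherwise. For the preparation states $\ket{\pm 1}$ (and likewise $\tfrac{\ket{+}-\ket{-}}{\sqrt 2}$) the amplitudes computed above, $\braket{1_x}{\psi}=\tfrac12$, $\braket{0_x}{\psi}=\pm\tfrac{1}{\sqrt 2}$, $\braket{-1_x}{\psi}=\tfrac12$, are all nonzero and of modulus strictly less than $1$; hence $\ket{\psi}$ is neither orthogonal nor parallel to any of the three eigenstates, and by Theorem~\ref{EffecKS} each projector $P_a^{(i)}$ is value indefinite. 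I would also note explicitly that selecting the subsequence of measurements indexed by a computable $S$ is itself an algorithmic repetition of the experiment, so the hypothesis of the epr principle will genuinely be met.

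With this setup the two cases collapse into one. If $S\subseteq\x^{-1}(a)$, then the outcomes of $P_a$ along $S$ are constantly $1$; if $S\subseteq\bbbn\setminus\x^{-1}(a)$, they are constantly $0$. In either case, because $S$ is computable, the repetition of the measurement of $P_a$ restricted to the positions in $S$ generates a computable (indeed constant, hence computable) sequence of outcomes. By the epr principle this forces the observables $P_a^{(i)}$, $i\in S$, to be value definite, contradicting the value indefiniteness established in the previous step. Since $a$ and $S$ were arbitrary, every support $\x^{-1}(a)$ is bi-immune and $\x$ is 3-bi-immune.

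The step carrying the real content, and the main obstacle, is the complement case: knowing only that $x_i\neq a$ tells us nothing about which of the two remaining ternary values occurs, so one cannot directly form a computable sequence of the full ternary outcomes. The resolution is precisely the coarse-graining to the binary-valued projector $P_a$, for which ``outcome $\neq a$'' is a definite, constant $0$ outcome, which is all that the epr principle requires; this is exactly where 3-bi-immunity genuinely strengthens mere bi-immunity of the ternary sequence. I would take care to argue that the coarse-grained measurement of $P_a$ versus $I-P_a$ is consistent with the full context measurement (outcome $P_a=1$ iff $x_i=a$), so that applying epr to the observable $P_a$ is legitimate; since the underlying observables sit in $\mathbb{C}^3$ rather than $\mathbb{C}^2$, this direct epr argument replaces any appeal to Theorem~\ref{biimm_2}.
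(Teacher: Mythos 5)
Your proof is correct and takes essentially the same approach as the paper, whose own (one-sentence) argument is precisely that an infinite computable subset of a support $\x^{-1}(a)$ or of its complement would, via the epr principle, force a definite value on the observables corresponding to the measurement outputs, contradicting Theorem~\ref{EffecKS}. Your added detail---coarse-graining each measurement to the binary projector $P_a$ so that the complement case also yields a constant, hence computable, outcome sequence---simply makes explicit the mechanism the paper leaves implicit.
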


It is  seen that the particular dimension 3 plays no role, so a stronger form of Theorem~\ref{biimm_2} is true:

\begin{theorem}
\label{biimm_b}
Assume the  Eigenstate and epr    principles. An infinite repetition of the
experiment 
measuring  a quantum value indefinite observable  in $\mathbb{C}^{b}$ always generates a
 $b$-bi-immune   sequence $\x\in A_b^{\omega}$. 
\end{theorem}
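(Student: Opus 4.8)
The plan is to reuse the reasoning behind Theorem~\ref{biimm_3} essentially verbatim, observing that the number $3$ enters only through the invocation of Theorem~\ref{EffecKS}, which is valid in every Hilbert space $\mathbb{C}^{n}$ with $n\ge 3$. So I would fix $b\ge 3$, take a sequence $\x\in A_b^{\omega}$ produced by the infinite repetition of the experiment, fix a digit $a\in A_b$, and aim to show that the support $\x^{-1}(a)=\{i\in\bbbn\mid x_i=a\}$ is bi-immune, i.e.\ that neither it nor its complement $\bbbn\setminus\x^{-1}(a)$ contains an infinite computable subset.

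First I would reduce both clauses of bi-immunity to a single value indefinite observable. Since each trial is a measurement in a fixed context, admissibility makes exactly one projector fire; let $P_a$ be the projector whose firing is recorded as the outcome $a$. Then $x_i=a$ means $P_a$ takes the value $1$ at trial $i$, and $x_i\neq a$ means it takes the value $0$. Consequently an infinite computable subset of $\x^{-1}(a)$ is precisely an infinite computable set of trials on which $P_a$ is constantly $1$, while an infinite computable subset of the complement is an infinite computable set of trials on which $P_a$ is constantly $0$; in either case the recorded values of $P_a$ along the selected trials form a computable sequence.

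The key step is then to feed this to the epr principle. Assuming, for a contradiction, that such a computable subset $S$ exists, I would note that selecting the trials indexed by $S$ is itself an algorithmically described infinite repetition of the experiment, along which the measured values of $P_a$ are the constant---hence computable---sequence $1^{\omega}$ or $0^{\omega}$. The epr principle then forces $P_a$ to be value definite. But in the QRNG the prepared state is neither orthogonal nor parallel to the state onto which $P_a$ projects, so Theorem~\ref{EffecKS}---applicable because $b\ge 3$---declares $P_a$ value indefinite, a contradiction. As $a\in A_b$ was arbitrary, $\x$ is $b$-bi-immune.

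The one point I expect to require care is the passage from the \emph{global} epr principle, stated for entire computable outcome sequences, to its application on a computably selected \emph{sub}-repetition: I would make explicit that a computable choice of trials is again an admissible algorithmic iteration of ``preparation, measurement, reset'', so that an infinite computable subset witnessing a failure of bi-immunity genuinely yields a computable outcome sequence for $P_a$ and legitimately triggers the principle. Everything else is uniform in $a$ and independent of $b$; the remaining case $b=2$ is already Theorem~\ref{biimm_2} (where $b$-bi-immunity collapses to ordinary bi-immunity), so combining the two covers all $b\ge 2$.
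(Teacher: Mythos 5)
Your proposal is correct and takes essentially the same route as the paper: the paper's entire justification for Theorem~\ref{biimm_b} is that ``the particular dimension 3 plays no role'' in the argument for Theorem~\ref{biimm_3}, which derives a contradiction between the epr principle (applied to a hypothetical infinite computable subset of $\x^{-1}(a)$ or its complement) and the value indefiniteness guaranteed by Theorem~\ref{EffecKS} --- exactly your argument. In fact your write-up is more careful than the paper's, since you make explicit the reduction to the single projector $P_a$ (which handles the complement clause, where the full outcomes along the selected trials need not be computable) and the point that a computably selected sub-repetition is again a legitimate input to the epr principle, both of which the paper leaves implicit.
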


\subsection{Ternary unpredictability}
\label{3unpredict}

It is easy to check that the proof of
Theorem~\ref{2unpredict} works not only for the binary case, 
but for an   arbitrary alphabet $A_b$, $b\geq 2$. In particular we have

\begin{theorem}  
\label{bunpredict}
Assume the epr and Eigenstate  principles.
Let $\x$ be an infinite sequence obtained by measuring  a quantum value indefinite observable  in $\mathbb{C}^{b}$  in an infinite repetition of the experiment $E$. Then no single bit $x_i$ can be predicted.
\end{theorem}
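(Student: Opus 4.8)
The plan is to re-run the proof of Theorem~\ref{2unpredict} almost verbatim, the whole point being that the non-probabilistic predictability framework --- experiment $E$, parameter $\la$, extractor $\langle\,\rangle$, predictor $P_E$, and $k,\langle\,\rangle$-correctness --- was never tied to the binary alphabet. The only change is cosmetic: a predictor now outputs a symbol of $A_b$ (or \textbf{prediction withheld}) rather than a bit, and the notions of \emph{correct prediction}, of \emph{$k,\langle\,\rangle$-correct}, and of \emph{predictable} carry over with $\{0,1\}$ replaced by $A_b$. First I would fix an index $i$ and argue by contradiction, assuming that the outcome $x_i$ is predictable.

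By the definition of predictability there are an extractor $\langle\,\rangle$ and a predictor $P_E$ that is $k,\langle\,\rangle$-correct for every $k$ and satisfies $P_E(\langle\la\rangle)=x_i$. Being $k,\langle\,\rangle$-correct for all $k$ means $P_E$ never issues an incorrect prediction while the number of its correct predictions grows without bound as $E$ is repeated; in particular the prediction $P_E(\langle\la\rangle)=x_i$ is correct with certainty. Since $P_E$ is \emph{passive} --- it halts and computes $x_i$ from the finite extracted information $\langle\la\rangle$ alone, without disturbing $E$ --- this is exactly an EPR ``prediction with certainty without in any way disturbing the system''. Hence, through the element-of-reality criterion encoded in the standing assumption $f(i,O_i,C_i)=v_i(O_i,C_i)$, the observable measured at step $i$ must carry a definite value, i.e.\ it is value definite under the admissible, non-contextual assignment associated with the run.

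This contradicts Theorem~\ref{EffecKS}, which --- under the Eigenstate principle together with admissibility and non-contextuality --- guarantees that the measured projector $P_\phi$ is value indefinite; therefore no $x_i$ can be predicted. The single step that genuinely has to be checked, and the only place where the dimension could a priori enter, is the passage from ``certain prediction of the $b$-valued outcome'' to ``value definiteness of a projector''. I would dispatch it by reducing the $b$-ary question to a binary one: in the measured context $C_i$ one has $\sum_a P_a = I$, and the outcome $x_i=a$ occurs precisely when $v(P_a)=1$, so a certain prediction of $x_i=a$ forces $v(P_a)=1$ while a certain prediction of $x_i\neq a$ forces $v(P_a)=0$; either way $P_a$ becomes value definite. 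Because this reduction, the EPR criterion, and Theorem~\ref{EffecKS} are all insensitive to whether $b=2$ or $b\ge 3$, the binary argument transfers with no substantive new obstacle --- the ``hard part'' is merely the bookkeeping verification that nothing in the proof of Theorem~\ref{2unpredict} silently relied on $b=2$.
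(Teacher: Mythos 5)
Your proposal is correct and matches the paper's approach exactly: the paper gives no separate proof of Theorem~\ref{bunpredict}, merely observing that ``it is easy to check that the proof of Theorem~\ref{2unpredict} works not only for the binary case, but for an arbitrary alphabet $A_b$, $b\geq 2$,'' which is precisely your strategy of re-running the binary argument and verifying that nothing depends on $b=2$. Your additional step reducing a certain prediction of a $b$-valued outcome to the value definiteness of a single projector $P_a$ in the measured context is a useful piece of bookkeeping that the paper leaves implicit.
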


\begin{corollary}
	Assume the epr and Eigenstate  principles.
Then, no single digit of every  sequence   $\x \in A_3^{\omega}$ generated by the QRNG  can be predicted. 
\end{corollary}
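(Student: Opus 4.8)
The plan is to read the statement as the $b=3$ instance of Theorem~\ref{bunpredict}, specialized to the concrete QRNG built in this section. The only thing that needs checking is that this QRNG genuinely realizes the setup underlying that theorem: an infinite repetition of an experiment $E$ that measures a \emph{value indefinite} observable in the three-dimensional Hilbert space $\bbbc^3$ of a spin-1 system.

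First I would confirm the value indefiniteness of each measured observable. The QRNG prepares the system in a state $\ket{\psi}\e\{\ket{+1},\ket{-1},\frac{\ket{+}-\ket{-}}{\sqrt{2}}\}$ and measures $S_x$, whose eigenstates are $\ket{1_x},\ket{0_x},\ket{-1_x}$. From the overlaps computed above, for each such $\ket{\psi}$ one has $|\braket{1_x}{\psi}|=|\braket{-1_x}{\psi}|=\tfrac{1}{2}$ and $|\braket{0_x}{\psi}|=\tfrac{1}{\sqrt{2}}$, all strictly between $0$ and $1$; hence each eigenstate of $S_x$ is neither orthogonal nor parallel to $\ket{\psi}$. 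Under the eigenstate principle, admissibility, and non-contextuality, Theorem~\ref{EffecKS} then certifies that the projection observables measured by the QRNG are value indefinite.

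Second, I would apply Theorem~\ref{bunpredict} directly with $b=3$: since every sequence $\x\e A_3^{\omega}$ generated by the QRNG arises from an infinite repetition of the experiment $E$ measuring a value indefinite observable in $\bbbc^3$, the theorem yields at once that no single digit $x_i$ can be predicted. I do not expect any genuine obstacle here; the work is purely a verification that the QRNG meets the hypotheses of the general theorem. The one point worth making explicit is that the three ternary outcomes $\{0,1,2\}$ of the QRNG are exactly the relabeled measurement outcomes of $S_x$ in $\bbbc^3$, so the identification $b=3$ is legitimate and the unpredictability conclusion of Theorem~\ref{bunpredict} transfers verbatim.
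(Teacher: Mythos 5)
Your proposal is correct and matches the paper's own (implicit) argument: the paper states this corollary as an immediate specialization of Theorem~\ref{bunpredict} to $b=3$, having already verified in Section~\ref{qrng2} that the preparation states $\ket{\pm 1}$, $\frac{\ket{+}-\ket{-}}{\sqrt{2}}$ give overlaps with the $S_x$ eigenstates strictly between $0$ and $1$, so that Theorem~\ref{EffecKS} certifies value indefiniteness of the measured observables. Your explicit check of these hypotheses is exactly the verification the paper relies on.
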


\section{Binary quantum random  sequences}
\label{2}

 As in most applications one needs
 binary random strings, in 
 this section we propose a method to transform ternary sequences into binary ones and, as in Section~\ref{3}, we study their bi-immunity, unpredictability and Borel normality.

\subsection{From ternary to binary sequences}
\label{3to2}

We give a simple  method to transform a ternary %quantum random 
sequence into a binary sequence. The method is an alphabetics morphism $\varphi \colon A_3 \rightarrow A_2$

% \begin{equation}
%\psi(a)= 
%\begin{cases}0, &\text{if }a=0,
%\\1, &\text{if }a=1,
%\\\varepsilon, &\text{if } a=2,\end{cases}
%\label{psi}\end{equation}
%

\begin{equation}
\label{amorph}
\varphi(a)= 
\begin{cases}0,&\text{if }a=0,
\\1,&\text{if }a=1,
\\0&\text{if } a=2,\end{cases}
\end{equation}

\noindent  which can be extended sequentially for strings, 
  $\mathbf{y}(n)=\varphi(\mathbf{x}(n))= \varphi(x_1)\varphi(x_2)\dots \varphi(x_n)$ and sequences $\mathbf{y}= \varphi(\mathbf{x}) = \varphi(x_1)\varphi(x_2)\dots \varphi(x_n) \dots $.

\subsection{Binary 2-bi-immunity}
\label{2biimm}

%In the binary case the definition given in Section~\ref{3biimm} is equivalent with the property that
%both the set of positive integers having as characteristic function the
%sequence $\mathbf{x}$ and its complement are immune in the sense of computability
%theory, i.e.~they are infinite and do not contain infinite computably
%enumerable (equivalently, computable) sets~\cite{rogers1}. 

To prove 2-bi-immunity we use Theorem~\ref{biimm_3} and the following:

\begin{theorem}[\cite{biimunity2020}]
  Consider $b\ge 3$ and an alphabetic morphism $\varphi$ of $A_{b}$ onto
  $A_{b-1}$. Then for every $b$-bi-immune sequence $\x\in A_{b}^{\omega}$, the
  sequence $\varphi(\x)\in A_{b-1}^{\omega}$ is $(b-1)$-bi-immune.
\end{theorem}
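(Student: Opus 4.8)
The plan is to exploit the rigid combinatorial structure of an onto alphabetic morphism together with the inclusion relations between the supports $\x^{-1}(a)$. First I would observe that, since $\varphi$ maps the $b$-letter alphabet $A_b$ onto the $(b-1)$-letter alphabet $A_{b-1}$, the pigeonhole principle forces exactly one letter $a^\ast\in A_{b-1}$ to have a two-element preimage $\varphi^{-1}(a^\ast)=\{c,d\}$ with $c\neq d$, while every other letter of $A_{b-1}$ has a singleton preimage. Writing $\y=\varphi(\x)$, for any target letter $a'\in A_{b-1}$ we have the exact description $\y^{-1}(a')=\bigcup_{a\in\varphi^{-1}(a')}\x^{-1}(a)$, so there are only two cases to handle.

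In the singleton case $\varphi^{-1}(a')=\{e\}$ we get $\y^{-1}(a')=\x^{-1}(e)$, which is bi-immune directly by the $b$-bi-immunity of $\x$; nothing more is needed. The substance lies in the collapsing case $a'=a^\ast$, where $\y^{-1}(a^\ast)=\x^{-1}(c)\cup\x^{-1}(d)$. Here I must show that both this union and its complement contain no infinite computable subset. The point is that a union of two bi-immune sets is \emph{not} bi-immune in general, so rather than argue about the union directly I would bound it from outside using a spare letter.

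Concretely, since $b\ge 3$ there exists a letter $e\in A_b\setminus\{c,d\}$. Because the supports are pairwise disjoint, $\y^{-1}(a^\ast)=\x^{-1}(c)\cup\x^{-1}(d)\subseteq\overline{\x^{-1}(e)}$; any infinite computable subset of $\y^{-1}(a^\ast)$ would therefore be an infinite computable subset of $\overline{\x^{-1}(e)}$, contradicting the bi-immunity of $\x^{-1}(e)$. Dually, the complement $\overline{\y^{-1}(a^\ast)}=\{i\in\bbbn : x_i\notin\{c,d\}\}$ avoids the value $c$, hence $\overline{\y^{-1}(a^\ast)}\subseteq\overline{\x^{-1}(c)}$, and an infinite computable subset of it would violate the bi-immunity of $\x^{-1}(c)$. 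Since both sets are infinite (the union contains the infinite set $\x^{-1}(c)$, and its complement contains the infinite set $\x^{-1}(e)$), this establishes the bi-immunity of $\y^{-1}(a^\ast)$ and completes all cases.

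The main obstacle---and the place where the hypothesis $b\ge 3$ is essential---is the immunity of the union $\x^{-1}(c)\cup\x^{-1}(d)$ itself: for $b=2$ there is no spare letter $e$, the morphism produces a constant (hence computable) sequence, and the statement fails. The key idea that makes the argument work is to replace the troublesome claim ``a union of bi-immune sets is bi-immune'' by the two containments $\y^{-1}(a^\ast)\subseteq\overline{\x^{-1}(e)}$ and $\overline{\y^{-1}(a^\ast)}\subseteq\overline{\x^{-1}(c)}$, turning each half of the bi-immunity requirement into the complement-immunity of a \emph{single} support, which is exactly what the $b$-bi-immunity of $\x$ supplies.
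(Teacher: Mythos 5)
Your proof is correct and complete. One thing worth knowing: the paper does not actually prove this theorem --- it imports it, statement only, from the cited work in progress \cite{biimunity2020} --- so there is no in-paper proof to compare yours against; your argument supplies what the paper leaves to an external reference. On its own merits, every step checks out. The pigeonhole observation (exactly one letter $a^\ast\in A_{b-1}$ has a two-element fibre $\{c,d\}$, all other fibres are singletons) correctly reduces the problem to the single collapsed letter, and your two containments
$\y^{-1}(a^\ast)=\x^{-1}(c)\cup\x^{-1}(d)\subseteq\overline{\x^{-1}(e)}$ (for a spare letter $e\notin\{c,d\}$, which exists precisely because $b\ge 3$) and $\overline{\y^{-1}(a^\ast)}\subseteq\overline{\x^{-1}(c)}$ turn each half of the required bi-immunity into the statement that the complement of a \emph{single} support of $\x$ has no infinite computable subset --- which is exactly what $b$-bi-immunity of $\x$ gives, since the supports partition $\bbbn$. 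This cleanly sidesteps the false general principle that a union of bi-immune sets is bi-immune, and it is the genuinely necessary idea here: a computable subset of the union cannot be computably split into its $c$-part and $d$-part, so no direct appeal to the bi-immunity of $\x^{-1}(c)$ or $\x^{-1}(d)$ alone could work. Your infinitude remarks and the observation that the case $b=2$ degenerates (the image sequence over $A_1$ is constant, hence computable) are also correct.
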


\begin{corollary}
\label{ex.3to2}
The alphabetic morphism $\varphi$ defined by (\ref{amorph}) converts a
  $3$-bi-immune sequence into a $2$-bi-immune sequence.
\end{corollary}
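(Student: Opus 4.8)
The plan is to derive the corollary as the special case $b=3$ of the preceding theorem; the only thing that needs checking is that the map $\varphi$ of~(\ref{amorph}) satisfies that theorem's hypothesis, namely that it is an alphabetic morphism of $A_3$ onto $A_2$.

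First I would confirm that $\varphi$ is an alphabetic morphism: it is defined letter-by-letter on $A_3=\{0,1,2\}$ with values in $A_2=\{0,1\}$, and it is extended to strings and sequences by $\varphi(\x)=\varphi(x_1)\varphi(x_2)\dots$, which is exactly the extension the theorem assumes. Then I would check that $\varphi$ is \emph{onto} $A_2$: since $\varphi(0)=0$ and $\varphi(1)=1$, both letters of $A_2$ belong to the image, so $\varphi$ maps $A_3$ onto $A_2$. Concretely, $\varphi$ identifies the letters $0$ and $2$ (both sent to $0$) while fixing $1$.

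With the hypothesis in place, I would invoke the theorem at $b=3$: it then asserts that $\varphi$ sends every $3$-bi-immune $\x\in A_3^{\omega}$ to a $2$-bi-immune $\varphi(\x)\in A_2^{\omega}$, which is the claim. Together with Theorem~\ref{biimm_3}, this yields that the binary sequence $\y=\varphi(\x)$ produced from any QRNG output $\x$ is $2$-bi-immune.

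At the level of the corollary there is essentially no obstacle; the substance lives in the cited theorem. For intuition about why the case $b=3$ is in fact immediate, note that $\y^{-1}(1)=\x^{-1}(1)$ is a single fibre, while $\y^{-1}(0)=\x^{-1}(0)\cup\x^{-1}(2)=\bbbn\setminus\x^{-1}(1)$ is its complement; hence the required $2$-bi-immunity of $\y$ reduces exactly to the bi-immunity of the single set $\x^{-1}(1)$, which is part of the $3$-bi-immunity of $\x$. The delicate point that the general theorem must handle -- a merged fibre that is not the complement of a single fibre -- simply does not occur when $b=3$, since collapsing two of the three letters leaves only one surviving letter.
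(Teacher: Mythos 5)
Your proposal is correct and takes essentially the same route as the paper, which gives no explicit proof: the corollary is simply the $b=3$ instance of the cited theorem from~\cite{biimunity2020}, applied after verifying that $\varphi$ is an alphabetic morphism of $A_3$ onto $A_2$ (onto since $\varphi(0)=0$, $\varphi(1)=1$). Your closing observation---that for $b=3$ the claim reduces to bi-immunity of the single fibre $\x^{-1}(1)$, since $\y^{-1}(0)$ is its complement and bi-immunity is closed under complement---is a correct self-contained shortcut the paper does not spell out, but it does not change the overall approach.
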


\subsection{Binary unpredictability}
%\label{2unpredict}

%{\color{red}
%Let $\mathbf{y}=\varphi (\mathbf{x})$, where $\mathbf{x}\in A_3^{\omega}$ is the ternary sequence generated by the  QRNG  and $\varphi$ is the alphabetic morphism defined in (\ref{amorph}).
%Consider the set $S_{\x}$ of partially computable functions $f$ with the following property: for every $i\in I_f$ we have $f(i)=x_i$, where $I_f\subseteq \dom(f) \subseteq \{i\geq 1 | x_i=1\}$ and $I_f$ is maximal. Let $\alpha\in S_{\x}$ such that $|I_\alpha|\geq |I_f| $ for all $f\in S_x$. From corollary 1 it follows that $I_\alpha$ is finite.
%Thus, there is some $n_0\in\mathbb{N}$ such that for every $n>n_0$ we have $\alpha(n)\neq x_n$.
%
%Suppose that $\mathbf{y}$ is \emph{predictable}, that is, there exists a predictor $P_E$ that is \emph{$k,\langle \, \rangle$-correct} for all $k$ with $P(\langle 
%\lambda_i \rangle)= y_i$ . In particular, it is \emph{$k,\langle \, \rangle$-correct} for all $k>n_0$. 
%Now, we construct the partially computable function $\beta$ with $\dom(\beta)=I_\alpha \cup \{i>n_0 | P_E(\langle\lambda_i \rangle)=1 \}$:
%\begin{equation}
%\beta(i)= 
%\begin{cases}\alpha(i),&\text{if }i\leq n_0,
%\\P_E(\langle\lambda_i \rangle),&\text{if }i>n_0\end{cases}
%\end{equation}
%
%It follows that $dom(\beta)=I_\beta > I_\alpha$ with $\beta(i)=x_i$ for every $i\in I_\beta$.
%}

\begin{theorem}
	Assume the epr and Eigenstate  principles. Let $\mathbf{y}=\varphi (\mathbf{x})$, where $\mathbf{x}\in A_3^{\omega}$ is a ternary sequence generated by the  QRNG  and $\varphi$ is the alphabetic morphism defined in (\ref{amorph}).
Then, no single bit of    $\y \in A_2^{\omega}$  can be predicted. 
\end{theorem}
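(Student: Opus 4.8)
The plan is to identify the binary digit $y_i=\varphi(x_i)$ with the outcome of measuring a single \emph{value indefinite} projection observable, and then to invoke the unpredictability already proved for such observables in Theorem~\ref{2unpredict} (equivalently, in the dimension-free form of Theorem~\ref{bunpredict}). I would deliberately avoid trying to reduce predictability of $y_i$ to predictability of $x_i$: since the morphism (\ref{amorph}) is not injective, a correct prediction $y_i=0$ leaves undecided whether $x_i=0$ or $x_i=2$, so no prediction of the ternary digit can be read off a prediction of the binary one, and the ternary Corollary of Section~\ref{3unpredict} cannot be used as a black box.

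First I would pin down the outcome labelling of the QRNG of Figure~\ref{fig2}: the probability-$\tfrac12$ branch (outcome $S_x=0$, collapse onto $\ket{0_x}$) carries the label $1$, while the two probability-$\tfrac14$ branches ($S_x=\pm 1$) carry the labels $0$ and $2$. With (\ref{amorph}) this gives $\varphi(x_i)=1$ precisely when the $i$-th measurement collapses onto $\ket{0_x}$ and $\varphi(x_i)=0$ otherwise, so that $y_i=v(P_{0_x})$ is exactly the value ($1$ if fired, $0$ if not) of the rank-one projector $P_{0_x}$ onto $\ket{0_x}$. By non-contextuality the value of $P_{0_x}$ is independent of the other observables of the $S_x$ context measured with it; hence $y_i$ is literally the outcome of the single-observable experiment ``prepare $\ket{\psi}$, measure $P_{0_x}$, output the bit''.

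Next I would verify that $P_{0_x}$ is value indefinite. For the QRNG the system is prepared in $\ket{\psi}=\ket{\pm 1}$ (and the same holds for the remaining candidate $\tfrac{\ket{+}-\ket{-}}{\sqrt{2}}$), and the inner products computed in Section~\ref{qrng2} give $|\braket{0_x}{\psi}|=\tfrac{1}{\sqrt{2}}$ in every case, so $0<|\braket{0_x}{\psi}|<1$; that is, $\ket{0_x}$ is neither orthogonal nor parallel to $\ket{\psi}$. Under the eigenstate, admissibility and non-contextuality assumptions, Theorem~\ref{EffecKS} (Localised Kochen--Specker) then certifies that $P_{0_x}$ is value indefinite.

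Finally, the experiment producing $y_i$ is exactly the single value indefinite observable, single-bit experiment $E$ of Theorem~\ref{2unpredict}: a value indefinite observable measured in a Hilbert space of dimension $\ge 3$. I would therefore apply that theorem directly to conclude that no single bit $y_i$ can be predicted --- any extractor and predictor that were $k,\langle\,\rangle$-correct for all $k$ and returned $y_i$ would predict the value indefinite observable $P_{0_x}$, which by the epr principle would force a definite value onto $P_{0_x}$ and contradict Theorem~\ref{EffecKS}. The main difficulty here is conceptual rather than computational: one must recognise that the non-injective morphism does not destroy unpredictability because $\varphi(x_i)$ coincides with the genuine value indefinite projector outcome $v(P_{0_x})$, the delicate point being the non-contextuality argument guaranteeing that measuring $P_{0_x}$ inside the full $S_x$ context yields the same value as measuring it in isolation.
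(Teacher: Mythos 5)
Your proof is correct, but it takes a genuinely different route from the paper's. The paper argues by contradiction and hands the contradiction to computability theory: a predictor for $\y$ that is $k,\langle\,\rangle$-correct for all $k$ would correctly and deterministically predict bits of $\y$, contradicting the $2$-bi-immunity of $\y$ (Corollary~\ref{ex.3to2}), which was itself deduced from ternary $3$-bi-immunity (Theorem~\ref{biimm_3}) and the alphabetic-morphism theorem of~\cite{biimunity2020}; a one-sentence appeal to the epr principle is then added as a physical gloss. You bypass bi-immunity and the morphism-preservation machinery entirely: with the labelling of Figure~\ref{fig2}, the morphism (\ref{amorph}) records exactly whether the $i$-th measurement collapsed onto $\ket{0_x}$, so $y_i$ is the outcome of the rank-one projector $P_{0_x}$; since $|\braket{0_x}{\psi}|=\frac{1}{\sqrt{2}}\in(0,1)$ for all three admissible preparation states, Theorem~\ref{EffecKS} certifies that $P_{0_x}$ is value indefinite, and $\y$ is then literally the output of the single-bit experiment $E$ to which Theorem~\ref{2unpredict} applies. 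As for what each approach buys: the paper's route is shorter and reuses Section~\ref{2biimm}, but it silently identifies a predictor --- whose input is the extracted physical information $\langle\lambda_i\rangle$ --- with a partial computable function of the index $i$, which is the very gap its physical aside is meant to patch, and it leans on a work-in-progress reference; your route avoids both, and it yields the conceptually sharper conclusion that the binary bits are themselves directly certified by a located value indefinite observable rather than inheriting unpredictability through post-processing. Your flagged ``delicate point'' (that measuring $P_{0_x}$ inside the full $S_x$ context yields the same bit as measuring it alone) is real but is exactly what the admissibility/non-contextuality discussion following Theorem~\ref{EffecKS} licenses, and your diagnosis that the non-injectivity of $\varphi$ blocks any black-box use of the ternary unpredictability corollary is also correct.
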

\begin{proof}
Let $\mathbf{y}$ be a sequence as in the statement above.
Fix an extractor $\langle\,  \rangle$, and
assume for the sake of contradiction that there exists a predictor $P_E$ for $\y$ which is $k,\langle \, \rangle$-correct for all $k\ge 1$.  Since $P_E$ \emph{never} makes an incorrect prediction, each of its predictions is correct with certainty, so the algorithm $P_E$ correctly and deterministically predicts the bits of $\y$, contradicting
 Corollary~\ref{ex.3to2}. A more physical explanation of this mathematical conclusion comes  from  the epr  principle:   $P_E$ predictions correspond to a value definite property of the system measured, i.e.~the QRNG, which contradicts  Theorem~\ref{biimm_2}. 
 \end{proof}

\subsection{Uniform distribution and Borel normality}
\label{2udnormal}

Recall that for $b\ge 2, A_b=\{0,1,2,\dots ,b-1\}$. Fix now an integer $m > 1$ and consider the alphabet $A_b^{m} =
\{a_{1}, \dots , a_{b^{m}}\} $ of all strings  $x \in  A_b^*$ with $|x|_b = m$, ordered lexicographically. A string  $x \in A_b^*$ will be denoted by   $x^m$ when we emphasise that  it belongs to $(A_b^{m})^*$.
 Take for example $A_2 = \{0,1\},
m=2,   A_2^{2} = \{00, 01, 10, 11\}$;  the string  $x =
0010101110 \e A_2^{\ast}$ will be denoted by  $x^2 = (00)(10)(10)(11)(10)$  when considered in $A_2^2$. Clearly,  $|x|_2 =  10$ and   $|x^2|_{4} = 5$.  
 In the same way a sequence  $\x \e A_b^{\omega}$  will be written  
 as  $\x^m$ when considered in $(A_b^m)^{\omega}$. 

Let  $\mathbf{x} \in  A_3^{\omega}$  and consider the random variable $X_n(\mathbf{x})=x_n$ on the probability space $( A_3^{\omega},\mathcal{B}(A_3^{\omega}),\mathbb{P}_3)$, where $\mathbb{P}_3$ is the  probability distribution of the QRNG. 
For simplicity we will write $X_n$ instead of $X_n(\mathbf{x})$ unless clarity suffers.
Then $X_1,X_2,\ldots,X_n, \dots $ is sequence of random variables mapping the sequence $\mathbf{x}$ to real-valued independent measurement outcomes, hence, it is a sequence of independent random variables with $\mathbb{P}_3(X_i=1)=\frac{1}{2}$ and $\mathbb{P}_3(X_i=0) = \mathbb{P}_3(X_i=2)=\frac{1}{4}$\rb . \ If $ \mathbf{x}\in A_3^{\omega}$, then $\mathbf{y}=\varphi(\mathbf{x})= \varphi(x_1)\varphi(x_2)\dots \in A_2^{\omega}$, so we can
 consider the random variable $
 Y_i(\mathbf{y})=y_i$.  Since the random variables $X_i$ correspond to independent events, we have that $\mathbb{P}_3(Y_i=1)=\mathbb{P}_3(X_i=1)=\frac{1}{2}$ and the expected value $\mathbb{E}_3(Y_i=0) = \mathbb{P}_3(X_i=0) + \mathbb{P}_3(X_i=2) = \frac{1}{2}$\rb.  Note that $Y_i$ takes values in $A_2$   with equal probabilities and $\mathbb{E}(Y_i)= 0 \cdot \mathbb{P}(Y_i=0) + 1 \cdot \mathbb{P}(Y_i=1) = \frac{1}{2}$\rb.  \   Thus $Y_1,Y_2,\ldots,Y_n, \dots $ is an independent and identically distributed (i.i.d.) sequence of random variables with uniform distribution, i.e.~in the Lebesgue probability space $( A_2^{\omega},\mathcal{B}(A_2^{\omega}),\mathbb{P})$.

 Is every sequence $\y$ Borel normal? To answer this question let's recall the definition of Borel normality. Let
$N_{i}(x)$ be the number of occurrences of  $i\in  A_b$
in the string $x \in  A_b^*$ and for
every $u\in A_b^{m}$
let $N_{u}^{m}(x^m)$ be
the number of occurrences of $u$ in the string $x^m \e (A_b^{m})^*$.
In the example above
 $ N_{0}^{1}(x)  =N_{1}^{1}(x) = 5$ and $N_{11}^{2}(x^2) = 1, N_{10}^{2}(x^2) = 3, N_{01}^{2}(x^2) = 0.$
There are strings $x \in  A_b^*$ for which   $x^m$ does not exist for some, even all,  $m$ (for example when $m$ is prime), but for all $m$ and $\x \e A_b^{\omega}$ the sequence $\x^m$ exists.

Recall that for $\x \e A_b^{\omega}$ and $n \geq 1$, $\x(n) = x_{1}x_{2} \dots
x_{n} \e A_b^{\ast}$.
The sequence $\x$ is called  $ m$-\emph{Borel normal} $(m \geq 1)$ in
case for every $u\in (A_b^{m})^*$ one has:
\[ \lim_{n \s \infty}
\frac{N_{u}^{m}(\x^m (\lfloor\frac{n}{m}\rfloor))}{\lfloor\frac{n}{m}\rfloor} = \frac{1}{b^{m}}\rb.\]

The sequence $\x \in A_b^{\omega} $ is called \emph{Borel normal} if
it is Borel $m$-normal, for every natural  $m \geq 1$.
  In particular,  a sequence $\x$ is Borel 1-normal when for every $a\in A_b$ we have:
 \[ \lim_{n \s \infty} \frac{N_{a}(\x(n))}{n} = \frac{1}{b}\rb.\]

We can generalise this construction of the i.i.d. random variables $(Y_i)$ by considering bit strings of arbitrary length $m\ge 1$ and then use the Strong Law of Large Numbers~\cite{Billingsley:1979aa} 
to get that {\it with probability 
one 
every bit sequence produced by the QRNG  is Borel normal.}
However, this result gives no new information as  Borel Law of Large Numbers~\cite{borel:09}  states that with probability one every bit sequence is Borel normal.
To get more insight we turn to  a finite version of Borel normality 
~\cite{DBLP:conf/dlt/Calude93} to analyse this property  for prefixes
of an arbitrary bit sequence produced via the  ternary sequence generated by the QRNG.

For every $\varepsilon >0$  and   integer $m > 1$  we say that a string $x\in A_2^* $
is {\it Borel normal with accuracy} $(m,\varepsilon)$ if

\begin{equation}
	\label{eps_normality}
	\left| \frac{N_{u}^{m}(x^m (\lfloor\frac{|x|_2}{m}\rfloor))}{\lfloor\frac{|x|_2}{m}\rfloor}  -  2^{-m} \right| \leq \varepsilon,	
\end{equation}
for each $u \in A_2^{m}$ and  $1 \leq m \leq \log_2  \log_2   |x|_2$.

It is useful to consider as  $\varepsilon$  a computable function of $|x|_2$ converging to zero when  $|x|_2$  to infinity. For example, in~\cite{DBLP:conf/dlt/Calude93,calude:02} the accuracy is 
$\sqrt{\frac{\log_2 |x|_2}{|x|_2}}$ and in~\cite{Abbott_2019} it is $\frac{1}{\log_2  |x|_2}$\rb.
%For any fixed  integer $m > 1$ and  string $x\in A_2^{m} $  we say that $x$  is \emph{\it Borel normal} (\emph{with accuracy} $\frac{1}{\log_2  }$) if for every integer $1 \leq m \leq  \log_2  \log_2   |x|_2$ and  each $u \in A_2^{m}$ we have:
%\begin{equation}
%	\label{lg_normality}
%	\left| \frac{N_{u}^{m}(x^m)}{|x^m|_{m}} -  2^{-m} \right| \leq\frac{1}{\lg  |x|_2}\raisebox{0.5ex}{.}
%\end{equation}
Almost all algorithmic random strings of any length are Borel normal with these accuracies~\cite{DBLP:conf/dlt/Calude93,calude:02}.
% in particular, they have approximately the same numbers number of 0s and 1s.
Furthermore, {\it if all prefixes of a  bit sequence are  Borel normal, then the sequence itself is also  Borel normal.} 

\begin{lemma}
	\label{2normal_finite}

Let $\x\in A_2^{\omega}$ be a ternary sequence generated by the QRNG and let $\y=\varphi(\x)$. Then for every $m >  1$, the probability that $\y(m)$ is Borel normal with accuracy $\left(m,  \sqrt{\frac{\log_2 |x|_2}{|x|_2}}\right)$ is at least $1-\frac{1}{\sqrt{\log_2 m}}$\rb.
\end{lemma}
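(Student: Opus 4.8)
The plan is to reduce the statement to a purely combinatorial fact about \emph{uniformly distributed} binary strings and then to invoke the finite version of Borel normality established in~\cite{DBLP:conf/dlt/Calude93,calude:02}. The only genuinely new ingredient is the law of the prefix $\y(m)$; everything else is the known estimate applied to this law. (Here $\x\in A_3^{\omega}$ is the ternary QRNG output and $\y=\varphi(\x)$, so $\y(m)$ has length $m$ and the accuracy reads $\varepsilon=\sqrt{(\log_2 m)/m}$.)

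First I would pin down the distribution of $\y(m)$. Since $\varphi(0)=\varphi(2)=0$ and $\varphi(1)=1$, each letter $Y_i=\varphi(X_i)$ satisfies $\mathbb{P}(Y_i=1)=\mathbb{P}(X_i=1)=\tfrac12$ and $\mathbb{P}(Y_i=0)=\mathbb{P}(X_i=0)+\mathbb{P}(X_i=2)=\tfrac12$, using the QRNG outcome probabilities $\tfrac14,\tfrac12,\tfrac14$. Moreover the $X_i$ arise from independent measurements in the ``preparation, measurement, reset'' cycle, so the $Y_i$ are independent as well. Hence $(Y_i)_{i\le m}$ is i.i.d.\ uniform and $\y(m)$ is distributed uniformly on $A_2^m$ under $\mathbb{P}$, which is exactly the hypothesis under which the finite-normality estimate is stated.

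Next I would run the counting argument on a uniform string. Fix a block length $k$ with $1\le k\le\log_2\log_2 m$ and split $\y(m)$ into $q_k=\lfloor m/k\rfloor$ consecutive non-overlapping blocks; by independence and uniformity these blocks are i.i.d.\ uniform over the $2^k$ words of $A_2^k$, so for each $u\in A_2^k$ the count $N_u^k$ is Binomial with mean $q_k2^{-k}$ and variance $q_k2^{-k}(1-2^{-k})$. The crucial point is that summing the variances over all $u\in A_2^k$ gives $q_k(1-2^{-k})\le q_k$, so Chebyshev's inequality together with the union bound over the $2^k$ words collapses the factor $2^k$: the failure probability at block length $k$ is at most $(1-2^{-k})/(q_k\varepsilon^2)$. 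With $\varepsilon^2=\tfrac{\log_2 m}{m}$ one has $q_k\varepsilon^2\approx(\log_2 m)/k$, and summing over $1\le k\le\log_2\log_2 m$ yields a total of order $(\log_2\log_2 m)^2/\log_2 m$.

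The main obstacle is matching the \emph{exact} threshold $1-\tfrac{1}{\sqrt{\log_2 m}}$. The Chebyshev bound $(\log_2\log_2 m)^2/\log_2 m$ does fall below $\tfrac{1}{\sqrt{\log_2 m}}$ for large $m$, but not uniformly for every $m>1$: near $k=\log_2\log_2 m$ the inequality is essentially tight and Chebyshev is too lossy (the relevant deviations there are many standard deviations and call for a Chernoff-type tail). The clean, uniform constant is precisely the finite-normality estimate proved in~\cite{DBLP:conf/dlt/Calude93,calude:02} via a sharper large-deviation/counting argument. Hence the cleanest route is to establish the uniformity of $\y(m)$ carefully -- the genuinely new step -- and then invoke that estimate as a black box, rather than re-deriving the tail bound. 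Finally one notes that for small $m$ with $\log_2\log_2 m<1$ the range of admissible block lengths is empty, so the conclusion holds vacuously.
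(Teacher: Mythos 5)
Your final route is exactly the paper's own proof: establish that $\y(m)$ is uniformly distributed on $A_2^m$ (the paper does this just before the lemma via the i.i.d.\ uniform $Y_i$), then invoke \cite[Lemma 5.43]{calude:02} as a black box --- at most $2^m/\sqrt{\log_2 m}$ strings of length $m$ fail the accuracy bound --- so the uniform probability of the bad set is at most $1/\sqrt{\log_2 m}$. Your Chebyshev detour is dispensable, and you correctly diagnose why it cannot recover the stated constant uniformly in $m$, which is precisely why the paper (and you, in the end) cite the counting estimate instead of re-deriving it.
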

\begin{proof}
	Using~\cite[Lemma 5.43]{calude:02} we deduce  that for every $m > 1$, 
	\[ \#\left\{ z\in A_2^m \mid z \mbox{ is not Borel normal with accuracy }  
	\left(   m,  \sqrt{\frac{\log_2 |x|_2}{|x|_2}} \right) 
	\right\}
	\]
	\[ \leq \frac{2^m}{\sqrt{\log_2 m}}\rb,\]
	
\noindent hence the probability that $\y(m)$ is Borel normal with accuracy $\left(m,  \sqrt{\frac{\log_2 |x|_2}{|x|_2}}\right)$ is greater or equal to
\begin{equation}
\label{prob}
 1 -  \frac{1}{\sqrt{\log_2 m}}\rb.	
\end{equation} 	
\end{proof}

We note that the probability (\ref{prob}) increases with $m$ but this is not enough
to deduce that  $\y=\varphi(\x)$ is Borel normal: we only get  Borel normality with probability one. With larger and larger probabilities the prefixes of $\y$ are Borel normal, a property which is useful for practical purposes -- when only finitely many bits of  $\y$ can be computed -- and this property can be {\it tested} (and it was tested
in~\cite{PhysRevA.82.022102,DBLP:journals/corr/abs-1810-08718,Abbott_2019}).

\section{Conclusions}
\label{concl}
We have proposed a new ternary QRNG based on measuring located value indefinite observables   and prove that every sequence generated  is maximally unpredictable,  3-bi-immune (a stronger form of bi-immunity),  and its prefixes are Borel normal. 
 The  ternary quantum random digits produced by the   QRNG are algorithmically transformed  into quantum random bits using an alphabetic morphism which preserves all the above properties. One important question remains to be studied: how various forms of measurement error affect the properties of the quantum random bits obtained with this QRNG, see~\cite{Quantis2014,AbbottCalude10,way_theorem2012}. The QRNG proposed in this paper will be realised physically using a method similar to the one used in~\cite{PhysRevLett.119.240501} and the quality of randomness of samples of strings of length $2^{32}$ will be tested in comparison with  strings of pseudo-random bits, produced
  by the best available pseudo-random number generators, using various methods including those  in~\cite{Abbott_2019}.

One referee asked the following interesting question. Suppose a randomness test  rejects the hypothesis of randomness for many long strings of quantum random bits generated by the proposed QRNG. Does this fact  refute the corresponding physical theory on which the QRNG is based on? Such an approach may be attractive to physicists, because it is somewhat cheaper than other sophisticated 
precision experiments designed to test the validity of quantum mechanics.
Tentatively the answer is negative. Firstly, theoretically, that is, ignoring  a 
whole host of possibly erroneous hypotheses entering the empirical interpretation, every test of randomness applies to  finitely many, admittedly, very long, strings of quantum random bits, so it does not prove  non-randomness, which is an asymptotic property of the {\em infinite} sequences quantum random bits. 
Secondly, following~\cite{lakatosch}, we would check for a bug in the QRNG implementation and/or  some questionable/flawed  
assumptions implicitly made in its construction.
 Lastly, if no issues were found with the implementation and the test is failed in many cases on a large variety of very long strings obtained with different QRNGs based on the same theory, then  
 the theoretical assumptions made in Section~\ref{val_in} would be scrutinised.

\section*{Acknowledgement}
 This work was supported by the U.S. Office of Naval Research Global under Grant N62909-19-1-2038. The cooperation and support of S. Feng from the  Office of Naval Research Global,  N. Allen and K. Pudenz  at Lockheed Martin and A. Fedorov and his team at University of Queensland are  much appreciated.
 We also thank M. Dumitrescu, 
 L. Staiger, C. Stoica and, particularly
   K. Svozil,  for many useful discussions and suggestions.
   
%\bibliography{cris_s}{} 
%\bibliographystyle{abbrv}

\end{document}